\newif\ifJFP
\newcommand{\shortcite}[1]{\cite{#1}}
\ifJFP\usepackage{minted}\else
\tiny\color[gray]{0.3},
 \newenvironment{proof}[1]{\begin{quotation}\noindent\textsf{Proof:} #1}%
 {\(\Box\)\end{quotation}}
 \newtheorem{fact}{Fact}
 \newtheorem{prop}{Proposition}
 \newtheorem{theorem}{Theorem}
 \newtheorem{conj}{Conjecture}
 \newcommand{\ie}{i.e.,~}
 \newcommand{\nat}{\ensuremath{\mathbb{N}}}
 \newcommand{\Ss}{\ensuremath{\mathbb{S}}}
 \newcommand{\St}{\ensuremath{\widetilde{\mathbb{S}}}}
 \newcommand{\Vv}{\ensuremath{{\mathbb{S}_{\infty}}}}
 \newcommand{\Var}[1]{\underline{\mathsf{#1}}}
 \newcommand{\nbct}[1]{T_{0,#1}}
 \newcommand{\nbt}[1]{T_{\infty,#1}}
 \newcommand{\C}{\mathscr{C}}
 \newcommand{\PP}{\mathbb{P}}
 \newcommand{\Me}{\mathbb{E}}
 \newcommand{\Soo}{\ensuremath{S_{\infty}}}
 \definecolor{rouge}{rgb}{0.6,0,0}
 \newcommand{\rouge}[1]{{\color{rouge} #1}}
 \newcommand{\bl}[1]{\textcolor{blue}{#1}}
 \newif\ifcomment
 \newcommand{\out}[1]{}
\begin{document}
 \ifJFP
 \title[Binary Lambda Terms]{Counting and Generating Terms \\in the Binary Lambda
   Calculus} 
 \author[K. Grygiel, P. Lescanne]{Katarzyna Grygiel$^1$\thanks{This work was
     partially supported by the grant 2013/11/B/ST6/00975 founded by the Polish
     National Science Center.}  ~and
   Pierre Lescanne$^{1,2}$\vspace{12pt}\\
   $^1$Jagiellonian University,
   Faculty of Mathematics and Computer Science,\\
   Theoretical Computer Science Department, \\
   ul. Prof. {\L}ojasiewicza 6, 30-348 Krak\'ow, Poland\vspace{10pt}\\
   $^2$University of Lyon, \\
   \'Ecole normale sup\'erieure de Lyon, \\
   LIP (UMR 5668 CNRS ENS Lyon UCBL INRIA)\\
   46 all\'ee d'Italie, 69364 Lyon, France\\
   (email: \texttt{grygiel@tcs.uj.edu.pl,pierre.lescanne@ens-lyon.fr})
 }
 \else
 \title{Counting and Generating Terms \\in the Binary Lambda
   Calculus\\\emph{(Extended version)}} 
 \author{Katarzyna Grygiel$^1$\thanks{This work was
     partially supported by the grant 2013/11/B/ST6/00975 founded by the Polish
     National Science Center.}  ~and
   Pierre Lescanne$^{1,2}$\vspace{12pt}\\
   $^1$Jagiellonian University,
   Faculty of Mathematics and Computer Science,\\
   Theoretical Computer Science Department, \\
   ul. Prof. {\L}ojasiewicza 6, 30-348 Krak\'ow, Poland\vspace{10pt}\\
   $^2$University of Lyon, \\
   \'Ecole normale sup\'erieure de Lyon, \\
   LIP (UMR 5668 CNRS ENS Lyon UCBL INRIA)\\
   46 all\'ee d'Italie, 69364 Lyon, France\\
   (email: \texttt{grygiel@tcs.uj.edu.pl,pierre.lescanne@ens-lyon.fr})
 }
 \fi

 \maketitle

 \begin{abstract}
   In a paper entitled \emph{Binary lambda calculus and combinatory logic}, John Tromp
   presents a simple way of encoding lambda calculus terms as binary sequences. In
   what follows, we study the numbers of binary strings of a given size that represent
   lambda terms and derive results from their generating functions, especially that
   the number of terms of size $n$ grows roughly like $1.963447954\ldots^n$.  In a
   second part we use this approach to generate random lambda terms using Boltzmann samplers.
  \vspace{6pt}\\\noindent \textbf{Keywords:} lambda calculus, combinatorics, functional
  programming, test, random generator, ranking, unranking, Boltzmann sampler.
 \end{abstract}

 \section{Introduction}

 In recent years growing attention has been given to quantitative research in logic
 and computational models. Investigated objects (e.g., propositional formulae,
 tautologies, proofs, programs) can be seen as combinatorial structures, providing
 therefore the inspiration for combinatorists and computer scientists. In particular,
 several works have been devoted to studying properties of lambda calculus terms. From the
 practical point of view, generation of random $`l$-terms is the core of debugging
 functional programs using random tests~\cite{DBLP:conf/icfp/ClaessenH00} and the
 present paper offers an answer to an open question (see introduction
 of~\cite{DBLP:conf/icfp/ClaessenH00}) since we are able to generate closed typable
 terms following a uniform distribution.  But this work applies beyond $`l$-calculus to
 any system with bound variables, like the first order predicate calculus (quantifiers are
 binders like $`l$) or block structures in programming languages.

 First traces of the combinatorial approach to lambda calculus date back to the work
 of Jue Wang~\shortcite{wang04:_effic_gener_random_progr_their_applic}, who initiated the
 idea of enumerating $`l$-terms. In her report, Wang defined the size of a term as the
 total number of abstractions, applications and occurrences of variables, which
 corresponds to the number of all vertices in the tree representing the given term.

 This size model, although natural from the combinatorial viewpoint, turned out to be
 difficult to handle. The question that arises immediately concerns the number of
 $`l$-terms of a given size. This task has been done for particular classes 
 of terms by Bodini, Gardy, and Gittenberger
 \shortcite{DBLP:journals/combinatorics/BodiniGGJ13} and
 Lescanne~\shortcite{DBLP:journals/tcs/Lescanne13}.

 The approach applied in the latter paper has been extended
 in~\cite{DBLP:journals/jfp/GrygielL13} by the authors of the current paper to the
 model in which applications and abstractions are the only ones that contribute to the
 size of a $`l$-term. The same model has been studied by David et
 al.~\shortcite{DBLP:journals/corr/abs-0903-5505}, where several properties satisfied
 by random $`l$-terms are provided.

 When dealing with the two described models, it is not difficult to define recurrence
 relations for the number of $`l$-terms of a given size. Furthermore, by applying standard
 tools of the theory of generating functions one obtains generating functions that are
 expressed in the form of infinitely nested radicals. Moreover, the radii of convergence are
 in both cases equal to zero, which makes the analysis of those functions very
 difficult to cope with.

 In this paper, we study the binary encoding of lambda calculus introduced in
 \cite{DBLP:conf/dagstuhl/Tromp06}. This representation results in another size
 model. It comes from the binary lambda calculus defined by Tromp, in which he builds a
 minimal self-interpreter of lambda calculus\footnote{An alternative to universal
 Turing machine.} as a basis of algorithmic complexity theory~\cite{LiVitanyi}.  
 Such a binary approach is more realistic from the functional programming viewpoint.
 Indeed, for compiler builders it
 is counter-intuitive to assign the same size to all the variables, because in the
 translation of a program written in \textsf{Haskell}, \textsf{Ocaml} or \textsf{LISP}
 variables are put in a stack.  A~variable deep in the stack is not as easily
 reachable as a variable shallow in the stack.  Therefore the weight of the former
 should be larger than the weight of the latter. Hence it makes sense to associate a
 size with a variable proportional to its distance to its binder.  %
 When we submitted~\cite{DBLP:journals/jfp/GrygielL13} to the \emph{Journal of Functional Programming},
 a referee wrote: ``If the authors want to use the de Bruijn representation, another
 interesting experiment could be done: rather than to count variables as size 0, they
 should be counted using their \emph{unary} representation.  This would penalize deep
 lexical scoping, which is not a bad idea since 'local' terms are much easier to
 understand and analyze than deep terms''.  %
 In this model, recurrence relations for the number of terms of a given size are built
 using this specific notion of size.  From that, we derive corresponding generating
 functions defined as infinitely nested radicals. However, this time the radius of
 convergence is positive and enables a further analysis of the functions. We are able
 to compute the asymptotics of the number of all (not necessarily closed) terms and we
 also prove an upper bound of the asymptotics of the number of closed ones. Moreover,
 we define an unranking function, \ie a generator of terms from their indices from
 which we derive a uniform generator of random $`l$-terms (general and typable) of a
 given size.  This allows us to provide outcomes of computer experiments in which we
 estimate the number of simply typable $`l$-terms of a given size.  

 Recall that Boltzmann samplers are programs for efficient generation of random
 combinatorial objects. Based on generating functions, they are parameterized by the
 radius of convergence of the generating function.  In addition to a more realistic
 approach of the size of the $`l$-terms, binary lambda calculus terms are associated
 with a generating function with a positive radius of convergence, which allows us to
 build a Boltzmann sampler, hence a very efficient way to generate random
 $`l$-terms. In Section~\ref{sec:Boltzmann} and Section~\ref{sec:lambda} we introduce
 the notion of Boltzmann sampler and we propose a Boltzmann sampler for $`l$-terms
 together with a \textsf{Haskell} program.

 A version~\cite{DBLP:journals/corr/GrygielL14} of the first part of this paper
 was presented at the \emph{25th International Conference on Probabilistic,
   Combinatorial and Asymptotic Methods for the Analysis of Algorithms}. 

 \section{Lambda calculus and its binary representation}

 In order to eliminate
 names of variables from the notation of $`l$-terms, de Bruijn introduced an
 alternative way of representing equivalent terms.

 Let us assume that we are given a countable set $\{ \Var{1}, \Var{2}, \Var{3}, \ldots \}$, 
  elements of which are called de Bruijn indices. We define de Bruijn 
 terms (called terms for brevity) in the following way:
 \begin{enumerate}[(i)]
 \item each de Bruijn index $\Var{i}$ is a term,
 \item if $M$ is a term, then $(\lambda M)$ is a term (called an abstraction),
 \item if $M$ and $N$ are terms, then $(MN)$ is a term (called an application).
 \end{enumerate}

 For the sake of clarity, we will omit the outermost parentheses. Moreover, we sometimes omit 
 other parentheses according to the convention that application associates to the left, and 
 abstraction associates to the right. Therefore, instead of $(MN)P$ we will write $MNP$, and instead of $\lambda (\lambda M)$ we will write $\lambda \lambda M$.

 Given a term $\lambda N$ we say that the $`l$ encloses all indices occurring in the term $N$. Given a term $M$, we say that an occurrence of an index $\Var{i}$ in the term $M$ is \emph{free} in $M$ if the number of $`l$'s in $M$ enclosing the occurrence of $\Var{i}$ is less than $i$. Otherwise, we say the given occurrence of $\Var{i}$ is bound by the $i$-th lambda enclosing it. A term $M$ is called closed if there are no free occurrences of indices.

 For instance, given a term $\lambda \lambda \Var{1} ( \lambda \Var{1} \ \Var{4})$, the first occurrence of $\Var{1}$ is bound by the second lambda, the second occurrence of $\Var{1}$ is bound by the third lambda, and the occurrence of $\Var{4}$ is free. Therefore, the given term is not closed.

 Following John Tromp, we define the binary representation of de Bruijn indices in the
 following way:
 \begin{eqnarray*}
   \widehat{\lambda M} &=& 00\widehat{M},\\
   \widehat{MN} &=& 01\widehat{M}\widehat{N},\\
   \widehat{\Var{i}} &=& 1^i0.
 \end{eqnarray*}
 However, notice that unlike Tromp~\shortcite{DBLP:conf/dagstuhl/Tromp06} and
 Lescanne~\shortcite{LescannePOPL94}, we start the de Bruijn indices at $1$ like de
 Bruijn~\shortcite{deBruijn72}.  Given a de Bruijn term, we define its size as the length of the
 corresponding binary sequence, \ie
 \begin{eqnarray*}
   |\Var{n}| &=& n + 1,\\
   |`l M | &=& |M| + 2,\\
   |M\,N| &=& |M| + |N| + 2.
 \end{eqnarray*}

 For instance, the de Bruijn term $\lambda \lambda \Var{1} ( \lambda \Var{1} \ \Var{4})$
 is represented by the binary sequence \linebreak $0000011000011011110$ and hence
 its length is $19$.

 In contrast to models studied previously, the number of all (not necessarily closed)
 \mbox{$`l$-terms} of a given size is always finite. This is due to the fact that the size of
 each variable depends on the distance from its binder.

 \section{Combinatorial facts}
 In order to determine the asymptotics of the number of all/closed $`l$-terms of a
 given size, we will use the following combinatorial notions and results.

 We say that a sequence $(F_n)_{n \geq 0}$ is of
 \begin{itemize}
 \item order $G_n$, for some sequence $(G_n)_{n \geq 0}$ (with $G_n\neq 0$), if
   \[ \lim_{n \to \infty} F_n/G_n = 1, \] and we denote this fact by $F_n \sim G_n$;
 \item exponential order $A^n$, for some constant $A$, if
   \[ \limsup_{n \to \infty} |F_n|^{1/n} = A, \] and we denote this fact by $F_n
   \bowtie A^n$.
 \end{itemize}

 Given the generating function $F(z)$ for a sequence $(F_n)_{n \geq 0}$, we write
 $[z^n]F(z)$ to denote the $n$-th coefficient of the Taylor expansion of $F(z)$,
 therefore $[z^n] F(z) = F_n$.

 The theorems below (Theorem IV.7 and Theorem VI.1 of \cite{flajolet08:_analy_combin})
 serve as powerful tools that allow us to estimate coefficients of certain functions that
 frequently appear in combinatorial considerations.

 \begin{fact}
   If $F(z)$ is analytic at $0$ and $R$ is the modulus of a singularity nearest to the
   origin, then
   \[ [z^n]F(z) \bowtie (1/R)^n.\]
 \end{fact}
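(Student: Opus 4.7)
The plan is to identify the quantity $\limsup_{n\to\infty}|F_n|^{1/n}$ appearing in the definition of exponential order with the reciprocal of the radius of convergence $\rho$ of the Taylor series $\sum_{n\geq 0}F_n z^n$, and then to identify $\rho$ with $R$, the modulus of the nearest singularity. The two identifications are quite different in flavour: the first is purely about the coefficients and the formal series, while the second is genuinely complex-analytic.

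The first step is the Cauchy--Hadamard formula, which gives $1/\rho=\limsup_{n\to\infty}|F_n|^{1/n}$. This is a routine consequence of the root test applied to $\sum F_n z^n$: for $|z|<\rho$ the series converges absolutely, while for $|z|>\rho$ it diverges. Thus, as soon as we know $\rho=R$, we are done: the definition of exponential order rewrites as $F_n\bowtie (1/R)^n$.

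The second step is therefore to prove $\rho=R$. One inequality, $\rho\geq R$, is immediate: since $F$ is analytic in the open disk $|z|<R$, its Taylor series at $0$ converges there, so $\rho\geq R$ by the maximality of the disk of convergence. The reverse inequality is the real content. One argues by contradiction: if $\rho>R$, then the series $\sum F_n z^n$ defines a holomorphic function $\widetilde F$ on $|z|<\rho$ which agrees with $F$ near $0$, hence by the identity theorem on the connected domain in which $F$ is defined and the disk $|z|<\rho$ overlap. This would extend $F$ analytically past a singularity at distance $R$, a contradiction.

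The delicate point, and the place where I expect the argument to require the most care, is precisely this last contradiction: one has to make sure that what is meant by ``singularity at distance $R$'' is robust, \emph{i.e.}, that there genuinely is a point $z_0$ with $|z_0|=R$ to which no analytic continuation of $F$ along a path in the disk $|z|<\rho$ can reach. With the usual convention that $R$ is the supremum of radii $r$ such that $F$ admits an analytic continuation to the whole open disk $|z|<r$, the statement follows cleanly. If instead one allows multi-valued continuations along different paths, one must be slightly more careful about which notion of ``singularity'' is used, but the generating functions arising from the combinatorial constructions of the paper are single-valued in a punctured disk around $0$, so no such pathology occurs here.
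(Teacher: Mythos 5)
Your argument is correct, but note that the paper offers no proof of this Fact at all: it is imported verbatim as Theorem IV.7 (the exponential growth formula) of Flajolet and Sedgewick's \emph{Analytic Combinatorics}, which the paper cites just above the statement. Your proof is essentially the standard one from that source: Cauchy--Hadamard identifies $\limsup_{n\to\infty}|F_n|^{1/n}$ with $1/\rho$, and the whole content is then the equality $\rho=R$ between the radius of convergence and the distance to the nearest singularity, which you obtain in the two directions exactly as in the textbook treatment. The caveat you raise yourself is the right one: the phrase ``agrees on the overlap by the identity theorem'' is slightly loose (the intersection of the domain of $F$ with the disc $|z|<\rho$ need not be connected), but under the convention you adopt in the last paragraph --- $R$ is the supremum of radii $r$ such that the germ of $F$ at $0$ continues analytically to the full disc $|z|<r$ --- the contradiction with $\rho>R$ is immediate and the identity-theorem detour is unnecessary, so the proof is complete; and, as you observe, the generating functions $\Ss_m$ and $\Vv$ to which the paper applies the Fact are single-valued algebraic functions, so no multivaluedness issue arises.
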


 \begin{fact}\label{fact:asym_exp}
   Let $\alpha$ be an arbitrary complex number in $\mathbb{C}\setminus
   \mathbb{Z}_{\leq 0}$. The coefficient of $z^n$ in
   \[ f(z) = (1-z)^{\alpha} \] admits the following asymptotic expansion:
   \begin{eqnarray*} [z^n]f(z) &\sim& \frac{n^{\alpha - 1}}{\Gamma (\alpha)} \left( 1
       + \frac{\alpha (\alpha - 1)}{2n} + \frac{`a(`a-1)(`a-2)(3`a-1)}{24n^2} + O
       \left( \frac{1}{n^{3}} \right) \right),
   \end{eqnarray*}
   where $\Gamma$ is the Euler Gamma function.
 \end{fact}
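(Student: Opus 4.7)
The plan is to reduce the computation of $[z^n]f(z)$ to an explicit ratio of Gamma functions, and then invoke the classical Stirling expansion. By the generalized binomial series, $(1-z)^{-\alpha} = \sum_{n \geq 0}\binom{n+\alpha-1}{n}z^n$, so
\[ [z^n](1-z)^{-\alpha} = \binom{n+\alpha-1}{n} = \frac{\Gamma(n+\alpha)}{\Gamma(\alpha)\,\Gamma(n+1)}, \]
which immediately explains the $1/\Gamma(\alpha)$ prefactor of the target formula; the hypothesis $\alpha \notin \mathbb{Z}_{\leq 0}$ is precisely what keeps $\Gamma(\alpha)$ finite and non-zero. (The statement as literally written for $(1-z)^{\alpha}$ reduces to the same situation after replacing $\alpha$ by $-\alpha$ via the identity $(-1)^n\binom{\alpha}{n} = \binom{n-\alpha-1}{n}$.)

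Next I would plug in Stirling's expansion
\[ \log\Gamma(z) = (z-\tfrac12)\log z - z + \tfrac12\log(2\pi) + \sum_{k\geq 1}\frac{B_{2k}}{2k(2k-1)\,z^{2k-1}}, \]
applied separately to $\Gamma(n+\alpha)$ and $\Gamma(n+1)$. Taking the difference and expanding each $\log(1+c/n)$ as a Taylor series in $1/n$ yields an expression of the form $\log\bigl(\Gamma(n+\alpha)/\Gamma(n+1)\bigr) = (\alpha-1)\log n + c_1/n + c_2/n^2 + O(1/n^3)$, with $c_1,c_2$ explicit polynomials in $\alpha$. Re-exponentiating and re-collecting in powers of $1/n$ then produces
\[ \frac{\Gamma(n+\alpha)}{\Gamma(n+1)} = n^{\alpha-1}\left(1 + \frac{\alpha(\alpha-1)}{2n} + \frac{\alpha(\alpha-1)(\alpha-2)(3\alpha-1)}{24\,n^2} + O(n^{-3})\right), \]
and dividing by $\Gamma(\alpha)$ delivers the claimed asymptotic.

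The main obstacle is the algebraic bookkeeping needed to reach the $1/n^2$ coefficient: one has to combine the leading Bernoulli corrections $B_2 = 1/6$ and $B_4 = -1/30$ with the second-order terms of $(n+\alpha-\tfrac12)\log(1+\alpha/n)$ and, after re-exponentiation, with the square of the $1/n$ contribution. Matching the resulting polynomial in $\alpha$ against $\alpha(\alpha-1)(\alpha-2)(3\alpha-1)/24$ is straightforward but tedious. An alternative route, more in the spirit of \cite{flajolet08:_analy_combin}, is to represent $1/\Gamma$ by a Hankel contour integral and extract the expansion of $[z^n](1-z)^{-\alpha}$ directly from the local behaviour of $f$ near the singularity $z=1$; this bypasses the Bernoulli arithmetic at the cost of setting up the transfer theorems of singularity analysis.
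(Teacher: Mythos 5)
Your proposal is correct, but be aware that the paper does not prove this statement at all: it imports it verbatim as Theorem~VI.1 of Flajolet and Sedgewick \cite{flajolet08:_analy_combin}, so the only ``paper proof'' to compare against is that citation. In the cited source the expansion is obtained by a Hankel contour argument (Cauchy's coefficient formula deformed around the singularity $z=1$, together with the Hankel representation of $1/\Gamma$), which is exactly the second route you mention in passing; your primary route --- writing $[z^n](1-z)^{-\alpha}$ exactly as the generalized binomial coefficient $\Gamma(n+\alpha)/(\Gamma(\alpha)\,\Gamma(n+1))$ and expanding the Gamma ratio by Stirling's series --- is a genuinely different and more elementary derivation, valid for all $\alpha\in\mathbb{C}\setminus\mathbb{Z}_{\leq 0}$ since $n+\alpha$ stays away from the poles of $\Gamma$ for large $n$ (note that the hypothesis avoids poles of $\Gamma(\alpha)$; $\Gamma$ never vanishes, so ``non-zero'' is automatic). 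What your route costs is precisely what you identify: the Bernoulli-number bookkeeping needed to confirm the coefficient $\alpha(\alpha-1)(\alpha-2)(3\alpha-1)/(24n^2)$, which you assert but do not carry out; the computation is routine and does come out right, so this is an acceptable gap in a sketch, though a complete write-up should display it. What the contour route buys instead is the surrounding machinery (transfer theorems, uniformity, logarithmic factors) that the paper implicitly relies on when it applies the fact not to a pure power of $(1-z)$ but to $-\sqrt{1-z}\cdot\sqrt{\rho Q(\rho z)}$. Finally, you were right to flag the exponent: as literally stated the fact concerns $(1-z)^{\alpha}$, yet the displayed expansion --- and the paper's later use with $\alpha=-1/2$, giving $n^{-3/2}/\Gamma(-1/2)$ for $\sqrt{1-z}$ --- is the Flajolet--Sedgewick statement for $(1-z)^{-\alpha}$; your derivation proves the version the paper actually uses, and the excluded set $\mathbb{Z}_{\leq 0}$ (where $(1-z)^{-\alpha}$ degenerates to a polynomial and $1/\Gamma(\alpha)=0$) is consistent with that reading.
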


 \section{The sequences $S_{m,n}$}

 Let us denote the number of $`l$-terms of size $n$ with at most $m$ distinct free
 indices by $S_{m,n}$.

 First, let us notice that there are no terms of size $0$ and $1$. Let us consider a
 $`l$-term of size $n+2$ with at most $m$ distinct free indices. Then we have one of
 the following cases.
 \begin{itemize}
 \item The term is a de Bruijn index $\Var{n+1}$, provided $m$ is greater than or
   equal to $n+1$.
 \item The term is an abstraction whose binary representation is given by
   $00\widehat{M}$, where the size of $M$ is $n$ and $M$ has at most $m+1$ distinct
   free variables.
 \item The term is an application whose binary representation is given by
   $01\widehat{M}\widehat{N}$, where $M$ is of size $i$ and $N$ is of size $n-i$, with
   $i \in \{ 0,\ldots,n \}$, and each of the two terms has at most $m$ distinct free variables.
 \end{itemize}

 This leads to the following recursive formula\footnote{Given a predicate $P$,
   $[P(\vec{x})]$ denotes the Iverson symbol, i.e., $[P(\vec{x})] = 1$ if $P(\vec{x})$
   and $[P(\vec{x})] = 0$ if $\neg P(\vec{x})$.}:
 \begin{eqnarray}
   S_{m,0} &=& S_{m,1} ~=~ 0,\label{eq:Smn}\\
   S_{m,n+2} &=& [m \ge n+1] + S_{m+1,n} + \sum_{k=0}^n S_{m,k} S_{m,n-k}.\label{eq:Smn2}
 \end{eqnarray}
 The sequence $(S_{0,n})_{n \geq 0}$, \ie the sequence of numbers of closed $`l$-terms of size $n$,
 can be found in the \emph{On-line Encyclopedia of Integer Sequences} under the number
 \textbf{A114852}. Its first $20$ values are as follows:
 \[ 0,\ 0,\ 0,\ 0,\ 1,\ 0,\ 1,\ 1,\ 2,\ 1,\ 6,\ 5,\ 13,\ 14,\ 37,\ 44,\ 101,\ 134,\
 298,\ 431.\]
 More values are given in Figure~\ref{fig:nb_typ}.
 The values of $S_{m,n}$ can be
 computed by the function we call $\mathsf{tromp}$ given in Figure~\ref{fig:tromp}.
 \begin{figure}[!th]
     \hrule\medskip
     \begin{lstlisting}
 -- Iverson symbol
 iv b = if b then 1 else 0

 -- Tromp size
 a114852Tab :: [[Integer]]
 a114852Tab = [0,0..] : [0,0..] : [[iv (n - 2 < m) + 
                                a114852Tab !! (n-2) !! (m+1) + 
                                s n m 
                               | m <- [0..]] | n <- [2..]]
   where s n m  = let ti = [a114852Tab !! i !! m | i <- [0..(n-2)]] in 
           sum (zipWith (*) ti (reverse ti))

 tromp m n = a114852Tab !! n !! m
   \end{lstlisting}
     \hrule\medskip
   \caption{The function \textsf{tromp} computing the $S_{m,n}$}
 \label{fig:tromp}
 \end{figure}

 Now let us define the family of generating functions for sequences $(S_{m,n})_{n \geq  0}$:
 \begin{eqnarray*}
   \Ss_m(z) &=& \sum_{n=0}^{\infty} S_{m,n}\, z^n.
 \end{eqnarray*}

 Most of all, we are interested in the generating function for the number of closed
 terms, \ie
 \begin{eqnarray*}
   \Ss_0(z) &=& \sum_{n=0}^{\infty} S_{0,n}\, z^n.
 \end{eqnarray*}

 Applying the recurrence on $S_{m,n}$, we get
 \begin{eqnarray*}
   \Ss_m(z) &=& z^2 \sum_{n= 0}^\infty S_{m,n+2} z^n\\
   &=& z^2 \sum_{n= 0}^\infty [m \ge n+1]z^n + z^2 \sum_{n= 0}^\infty S_{m+1,n}\,z^n + z^2 \sum_{n= 0}^\infty\sum_{k=0}^n
   S_{m,k} S_{m,n-k}\,z^n\\
   &=&  z^2 \sum_{k=0}^{m-1} z^k + z^2 \Ss_{m+1}(z) + z^2 \Ss_m(z)^2\\
   &=& \frac{z^2\,(1-z^m)}{1-z}  + z^2 \Ss_{m+1}(z) + z^2 \Ss_m(z)^2.
 \end{eqnarray*}

 Solving the equation
 \begin{eqnarray}\label{eq:Szu}
   z^2 \Ss_m(z)^2 -  \Ss_m(z) +\frac{z^2\,(1-z^m)}{1-z}  + z^2 \Ss_{m+1}(z) = 0
 \end{eqnarray}
 gives us
 \begin{eqnarray}\label{eq:sm}
   \Ss_m(z) \ =\ \frac{1 - \sqrt{1 - 4z^4 \left(\frac{1-z^m}{1-z}  +  \Ss_{m+1}(z)\right)}}{2 z^2}.
 \end{eqnarray}

 This means that the generating function $\Ss_m(z)$ is expressed by means of
 infinitely many nested radicals, a phenomenon which has already been encountered in
 previous research papers on enumeration of $`l$-terms, see e.g.,
 \cite{bodini11:_lambd_bound_unary_heigh}. However, in Tromp's binary lambda calculus
 we are able to provide more results than in other representations of $`l$-terms.

 First of all, let us notice that the number of $`l$-terms of size $n$ has to be
 less than $2^n$, the number of all binary sequences of size $n$. This means that in
 the considered model of $`l$-terms the radius of convergence of the generating
 function enumerating closed $`l$-terms is positive (it is at least $1/2$), which
 is not the case in other models, where the radius of convergence is equal to zero.

 \section{The number of all $`l$-terms}

 Let us now consider the sequence enumerating all binary $`l$-terms, \ie including
 terms that are not closed. Let $S_{\infty,n}$ denote the number of all such terms of
 size $n$. Repeating the reasoning from the previous section, we obtain the following
 recurrence relation:

 \begin{eqnarray*}
   S_{\infty,0} &=& S_{\infty,1} ~=~ 0,\\
   S_{\infty,n+2} &=& 1 + S_{\infty,n} + \sum_{k=0}^n S_{\infty,k} S_{\infty,n-k}.
 \end{eqnarray*}

 The sequence $(S_{\infty,n})_{n`:\nat}$ can be found in \emph{On-line Encyclopedia of
   Integer Sequences} with the entry number \textbf{A114851}. Its first $20$ values
 are as follows:
 \[ 0,\ 0,\ 1,\ 1,\ 2,\ 2,\ 4,\ 5,\ 10,\ 14,\ 27,\ 41,\ 78,\ 126,\ 237,\ 399,\ 745,\
 1292,\ 2404,\ 4259.\]
 More values are given in Figure~\ref{fig:nb_typ}.

 Obviously, we have $S_{m,n} \le S_{\infty,n}$ for every $m,n \in \nat$. Moreover,
 $\displaystyle \lim_{m\to \infty} S_{m,n} = S_{\infty,n}$.

 Let $\Vv(z)$ denote the generating function for the sequence
 $(S_{\infty,n})_{n`:\nat}$, that is
 \[\Vv(z) = \sum_{n=0}^{\infty}S_{\infty,n} z^n.\]
 Notice that for $m \geq n-1$ we have $S_{m,n}=S_{\infty,n}$. Therefore,
 \[\Vv(z) = \sum_{n=1}^{\infty}S_{n,n}z^{n},\]
 which yields that $[z^n]\Ss_{n,n}=[z^n]\Ss_{\infty,n}$. Furthermore, $\displaystyle
 \Vv(z) = \lim_{m \to \infty} \Ss_m(z)$ for all ${z \in (0,\rho)}$, where $\rho$ is 
 the dominant singularity of $\Ss_\infty(z)$.

 \begin{theorem}
   The number of all binary $`l$-terms of size $n$ satisfies
   \[ S_{\infty,n} \sim \rho^{-n} \cdot \frac{C}{n^{3/2}},\] where $\rho \doteq
   0.509308127$ and $C \doteq 1.021874073$.
 \end{theorem}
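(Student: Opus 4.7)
The plan is to convert the recurrence for $S_{\infty,n}$ into an algebraic equation for $\Vv$, locate its dominant singularity $\rho$, and apply standard singularity analysis.

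First, multiplying the recurrence by $z^{n+2}$ and summing over $n \geq 0$ (using $S_{\infty,0}=S_{\infty,1}=0$) gives
$$\Vv(z) \;=\; \frac{z^2}{1-z} + z^2\,\Vv(z) + z^2\,\Vv(z)^2,$$
equivalently $z^2\,\Vv(z)^2 + (z^2-1)\Vv(z) + z^2/(1-z) = 0$. Selecting the branch with $\Vv(0)=0$ yields the explicit algebraic form
$$\Vv(z) \;=\; \frac{(1-z^2) - \sqrt{(1-z^2)^2 - 4z^4/(1-z)}}{2z^2}.$$
The only candidate singularities are the pole of $1/(1-z)$ at $z=1$ and the smallest positive zero $\rho$ of the discriminant $D(z) := (1-z)(1-z^2)^2 - 4z^4$. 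A simple sign change (e.g.\ $D(1/2)>0$, $D(0.51)<0$) places $\rho \doteq 0.509308127 < 1$, so $\rho$ dominates. Since $\Vv$ has non-negative coefficients, Pringsheim's theorem guarantees $\rho$ is indeed a singularity of $\Vv$, and aperiodicity (the support of $(S_{\infty,n})$ contains $n=2,3$, hence has gcd $1$) rules out any other singularity on $|z|=\rho$.

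Next, I would perform a local expansion at $\rho$. Because $D'(\rho)\neq 0$, the radicand factors locally as $(1 - z/\rho)\,K(z)$ with $K$ analytic at $\rho$ and $K(\rho) > 0$, giving
$$\Vv(z) \;=\; g(z) - h(z)\sqrt{1 - z/\rho}$$
in a slit neighborhood of $\rho$, with $g,h$ analytic at $\rho$ and $h(\rho) = \sqrt{K(\rho)}/(2\rho^2)$. Applying Fact~\ref{fact:asym_exp} with exponent $1/2$, after the rescaling $z \mapsto z/\rho$, then yields
$$[z^n]\Vv(z) \;\sim\; \frac{h(\rho)}{2\sqrt{\pi}}\,\rho^{-n}\,n^{-3/2},$$
which is the stated asymptotic, with $C = h(\rho)/(2\sqrt{\pi})$ matching the numerical value $1.021874073$ after an explicit computation of $K(\rho)$.

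The main delicate point is justifying that singularity analysis is legitimate: this requires extending $\Vv$ analytically to a $\Delta$-domain at $\rho$ and showing the square-root expansion above holds uniformly there. This follows from the algebraic nature of $\Vv$, the fact that $\rho$ is a simple branch point (since $D'(\rho)\neq 0$), and the absence of any other singularity in the closed disk $|z| \leq \rho$; I expect this verification to be the main technical hurdle, though it is entirely standard for algebraic generating functions of this shape.
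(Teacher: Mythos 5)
Your proposal is correct and follows essentially the same route as the paper: the same functional equation $\Vv(z)=\frac{z^2}{1-z}+z^2\Vv(z)+z^2\Vv(z)^2$, the same explicit branch, the dominant singularity $\rho$ as the smallest zero of the quintic discriminant (your $D(z)$ is exactly the paper's $R_\infty(z)$ up to sign/factor of $1-z$), and the same application of the $(1-z)^{1/2}$ transfer (Fact~\ref{fact:asym_exp}) giving $C=\sqrt{K(\rho)}/(4\rho^2\sqrt{\pi})$, which agrees with the paper's $\widetilde{C}/\Gamma(-\tfrac12)$. The only small difference is that where you invoke Pringsheim plus aperiodicity to exclude other singularities on $|z|=\rho$, the paper simply lists all roots of the quintic (the complex ones have modulus about $0.89$), which is the more direct and airtight way to settle that point.
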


 \begin{proof}
   The generating function $\Vv(z)$ fulfills the equation
   \[\Vv(z) = \frac{z^2}{1-z} + z^2 \Vv(z) + z^2 \Vv(z)^2.\]

   Solving the above equation gives us
   \[\Vv(z) = \frac{ (1-z)(1-z^2) - \sqrt{(1-z)(1 - z - 2\,z^2 + 2\,z^3 - 3\,z^4 - z^5)}}{2z^2(1 - z)}.\]

   The dominant singularity of the function $\Vv(z)$ is given by the root of smallest
   modulus of the polynomial
   \[R_\infty(z) = 1 - z - 2\,z^2 + 2\,z^3 - 3\,z^4 - z^5.\]

   The polynomial has three real roots: \[0.509308127\ldots, \quad -0.623845142\ldots, \quad -3.668100004\ldots,\] 
   and two complex ones that are approximately equal to $0.4 + 0.8i$ and $0.4 - 0.8i$.

   Therefore, $\rho \doteq 0.509308127$ is the singularity of $\Vv$ nearest to the
   origin.  Let us write $\Vv(z)$ in the following form:
   \[\Vv(z) = \frac{ 1-z^2 - \sqrt{\rho (1-\frac{z}{\rho}) \cdot Q(z)}}{2z^2},\] 
   where $Q(z)$
   is a rational function defined for all $|z| \le \rho$.

   We get that the radius of convergence of $\Vv(z)$ is equal to $\rho$ and its
   inverse $\frac{1}{`r} \doteq 1.963447954$ gives the growth of
   $S_{\infty,n}$. Hence, $S_{\infty,n} \bowtie (1/\rho)^n$.

   Fact \ref{fact:asym_exp} allows us to determine the subexponential factor of the
   asymptotic estimation of the number of terms. Applying it, we obtain
   \[ [z^n]\Vv(z) = \rho^{-n} [z^n]\Vv(\rho z) 
                  \sim \rho^{-n} [z^n] \frac{- \sqrt{1-z} \cdot \sqrt{\rho Q(\rho z)}}{2\rho^2z^2} 
                  \sim \rho^{-n} \cdot \frac{n^{-3/2}}{\Gamma (-\frac{1}{2})} \cdot \widetilde{C}, \] 
   where the constant $\widetilde{C}$ is given by
   \[\widetilde{C} = \frac{- \sqrt{\rho \cdot Q(\rho)}}{2 \rho ^{2}}
   \doteq -0.288265354.\] Since $\displaystyle \frac{\widetilde{C}}{\Gamma
     (-\frac{1}{2})} \doteq 1.021874073$, the theorem is proved.
 \end{proof}

 \section{The number of closed $`l$-terms}

 \begin{prop}
   Let $\rho_m$ denote the dominant singularity of $\Ss_m(z)$. Then for every natural
   number $m$ we have
   \[\rho_m = \rho_0,\]
   which means that all functions $\Ss_m(z)$ have the same dominant singularity.
 \end{prop}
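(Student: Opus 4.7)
The plan is to prove the equality by sandwiching the coefficients $S_{m,n}=[z^n]\Ss_m(z)$ between two shifted copies of $S_{0,n}$, and then translating the inequalities into a statement about exponential growth rates via Fact~1, which identifies $1/\rho_m$ with $\limsup_n S_{m,n}^{1/n}$.

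The direction $\rho_m \leq \rho_0$ is immediate: every closed term trivially has at most $m$ distinct free indices, so $S_{0,n} \leq S_{m,n}$ for all $n$ and Fact~1 gives $1/\rho_0 \leq 1/\rho_m$. For the reverse inequality I would exhibit the injection $M \mapsto \lambda^m M$ from the set of terms of size $n$ with all free de~Bruijn indices bounded by $m$ into the set of closed terms of size $n+2m$. The size count is transparent, since each prepended $\lambda$ contributes exactly two bits to the binary encoding. Closedness requires the standard de~Bruijn bookkeeping: a free occurrence of $\Var{i}$ in $M$ enclosed by $k$ lambdas satisfies $k < i \leq m$, so in $\lambda^m M$ that same occurrence is enclosed by $k+m \geq i$ lambdas and is therefore bound. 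The map is clearly injective, since one recovers $M$ by stripping the outermost $m$ abstractions. Thus $S_{m,n} \leq S_{0,n+2m}$.

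Combining this with Fact~1 and the elementary observation that a constant shift of the index does not affect the exponential growth rate (because $(n+2m)/n \to 1$ and $S_{0,k}^{1/k}$ is bounded), I would conclude
\[
\frac{1}{\rho_m} \;=\; \limsup_{n} S_{m,n}^{1/n} \;\leq\; \limsup_{n} S_{0,n+2m}^{1/n} \;=\; \limsup_{k} S_{0,k}^{1/k} \;=\; \frac{1}{\rho_0},
\]
so $\rho_m \geq \rho_0$ and, together with the previous inequality, $\rho_m = \rho_0$. The only subtle step is the de~Bruijn depth argument that verifies $\lambda^m M$ is indeed closed whenever $M$ has all free indices bounded by $m$; everything else is a routine transfer from combinatorial inequalities to radii of convergence.
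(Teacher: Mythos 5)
Your proof is essentially correct, but it takes a genuinely different route from the paper's. The paper compares consecutive indices: from $S_{m,n}\le S_{m+1,n}$ it gets $\rho_m\ge\rho_{m+1}$, and from the nested-radical relation of Equation~(\ref{eq:sm}) it reads off that every singularity of $\Ss_{m+1}(z)$ is a singularity of $\Ss_m(z)$, giving $\rho_m\le\rho_{m+1}$; the two inequalities force all the $\rho_m$ to coincide. You instead compare each $\Ss_m$ directly with $\Ss_0$ at the level of coefficients: $S_{0,n}\le S_{m,n}$ in one direction, and the injection $M\mapsto\lambda^m M$ giving $S_{m,n}\le S_{0,n+2m}$ in the other, then transfer to radii via exponential order (the shift by the constant $2m$ is indeed harmless, since $1\le S_{0,k}\le 2^k$ for all large $k$). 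Your argument is more elementary and purely combinatorial: it avoids the analytic step of propagating singularities through the functional equation, which the paper asserts by inspection of the closed form, whereas the paper's version needs no coefficient asymptotics and sets up the functionals $\Phi_m$ reused in the following proposition. One correction to your de Bruijn bookkeeping, which you rightly flagged as the only delicate point: $S_{m,n}$ counts terms that become closed under $m$ additional outer abstractions, so a free occurrence of $\Var{i}$ enclosed by $k$ lambdas in $M$ satisfies $k<i\le k+m$, not $k<i\le m$ (for instance $\lambda\Var{3}$ is counted by $S_{2,6}$ although its free index is $3>2$). Your conclusion that in $\lambda^m M$ the occurrence is enclosed by $k+m\ge i$ lambdas holds verbatim under the correct condition, so the injection, the inequality $S_{m,n}\le S_{0,n+2m}$, and the rest of your argument stand.
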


 \begin{proof}
   First, let us notice that for every $m,n \in \nat$ we have $S_{m,n} \le
   S_{m+1,n}$. This means that the radius of convergence of the generating function
   for the sequence $(S_{m,n})_{n \in \nat}$ is not smaller than the radius of
   convergence of the generating function for $(S_{m+1,n})_{n \in \nat}$. Therefore,
   for every natural number $m$, we have
   \[ \rho_m \ge \rho_{m+1} .\]

  Additionally, from Equation~(\ref{eq:sm}) we see that every singularity of
   $\Ss_{m+1}(z)$ is also a singularity of $\Ss_m(z)$. Hence, the dominant singularity
   of $\Ss_{m}(z)$ is less than or equal to the dominant singularity of
   $\Ss_{m+1}(z)$, \ie we have
   \[ \rho_m \le \rho_{m+1} .\]

   These two inequalities show that dominant singularities of all functions $\Ss_m(z)$
   are the same. In particular, for every $m$ we have $\rho_m = \rho_0$.
 \end{proof}
 \begin{prop}
   The dominant singularity of \ $\Ss_0(z)$ is equal to the dominant singularity of
   $\Vv(z)$, \ie
   \[\rho_0 \ = \ \rho \ \doteq \ 0.509308127. \]
 \end{prop}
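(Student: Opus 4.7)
My plan is to establish the two inequalities $\rho_0 \ge \rho$ and $\rho_0 \le \rho$ separately. The first direction is the easy one: the coefficient inequality $S_{0,n} \le S_{\infty,n}$ (since closed terms form a subset of all terms) makes the positive-real series for $\Ss_0(z)$ term-by-term dominated by that of $\Vv(z)$, so $\Ss_0$ converges at least on $(0,\rho)$ and therefore $\rho_0 \ge \rho$.

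The reverse direction $\rho_0 \le \rho$ is the main obstacle, because a monotone pointwise limit of power series does not in general preserve the radius of convergence. My idea is to extract a uniform-in-$m$ bound on $\Ss_m(z)$ from the functional equation~(\ref{eq:Szu}), rewritten as
\[
\Ss_m(z) \;=\; z^2\,\Ss_m(z)^2 \;+\; \frac{z^2(1-z^m)}{1-z} \;+\; z^2\,\Ss_{m+1}(z).
\]
For $z \in (0,\rho_0) \subset (0,1)$, all three summands on the right are non-negative: the power series $\Ss_m$ and $\Ss_{m+1}$ have non-negative coefficients, and $(1-z^m)/(1-z) = 1 + z + \cdots + z^{m-1} \ge 0$. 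Dropping the last two terms yields $\Ss_m(z) \ge z^2 \Ss_m(z)^2$, which, whenever $\Ss_m(z) > 0$, gives the $m$-independent bound $\Ss_m(z) \le 1/z^2$ (and the inequality is trivial otherwise).

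To conclude, I will combine this uniform bound with the monotone convergence $S_{m,n} \nearrow S_{\infty,n}$ already noted in the excerpt. By Beppo Levi applied to the counting measure on $\nat$,
\[
\Vv(z) \;=\; \sum_{n \ge 0} S_{\infty,n}\, z^n \;=\; \lim_{m\to\infty}\sum_{n \ge 0} S_{m,n}\, z^n \;=\; \lim_{m\to\infty}\Ss_m(z) \;\le\; \frac{1}{z^2} \;<\; \infty
\]
for every $z \in (0,\rho_0)$, so the radius of convergence of $\Vv$ is at least $\rho_0$, \ie $\rho \ge \rho_0$. Combined with the first direction, this yields $\rho_0 = \rho$.
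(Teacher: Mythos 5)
Your argument is correct, and it takes a genuinely different and substantially shorter route than the paper. The easy direction $\rho_0\ge\rho$ is the same in both. For the converse, the paper never extracts your uniform bound: instead, for each $m$ it introduces the auxiliary series $\St_m$, the fixed point of the functional $\Phi_m(F)=\bigl(1-\sqrt{1-4z^4\bigl(\tfrac{1-z^m}{1-z}+F\bigr)}\bigr)/(2z^2)$ coming from~(\ref{eq:sm}), uses monotonicity of $\Phi_m$ and iteration to get $\St_m\le\Ss_m\le\Vv$ on $(0,\rho)$, computes the dominant singularity $\sigma_m$ of $\St_m$ as the root of smallest modulus of the explicit polynomial $P_m(z)=4z^4(1-z^m)-(1-z)^3(1+z)^2$, and then argues that $P_m=P_\infty-4z^{m+4}$ converges uniformly, so $\sigma_m\to\rho$; squeezing $\rho\le\rho_m\le\sigma_m$ gives $\rho_m\to\rho$, and the preceding proposition (all the $\rho_m$ coincide) finally yields $\rho_0=\rho$. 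Your proof replaces the fixed-point construction, the discriminant and root analysis, and the limiting argument on polynomials by the observation that~(\ref{eq:Szu}) forces $\Ss_m(z)\le 1/z^2$ uniformly in $m$ on $(0,\rho_0)$, after which monotone convergence bounds $\Vv$ there; as a bonus, run from an arbitrary starting index it re-proves the preceding proposition ($\rho_m=\rho$ for every $m$) rather than consuming it. What the paper's longer route buys is explicit quantitative information---the polynomials $P_m$ and the singularities $\sigma_m$ sandwiching $\rho_m$ (Figure~\ref{fig:roots})---which your argument does not produce, but which the proposition does not require. One step you should make explicit: to divide $\Ss_m(z)\ge z^2\Ss_m(z)^2$ by $\Ss_m(z)$ you need $\Ss_m(z)<\infty$ for every $m$ at the chosen $z\in(0,\rho_0)$, not only for $m=0$; this is exactly the preceding proposition ($\rho_m=\rho_0$ for all $m$), or, if you prefer to remain independent of it, it follows by induction on $m$ because the same identity gives $z^2\,\Ss_{m+1}(z)\le\Ss_m(z)<\infty$.
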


 \begin{proof}
   Since the number of closed binary $`l$-terms is not greater than the number of all
   binary terms of the same size, we conclude immediately that $\rho_0 \geq \rho$.

   Let us now consider the functionals $`F_{\infty}$ and $`F_m$ for every $m \in \nat$.  
   By Equation~(\ref{eq:sm}), for every $m$ the functional $`F_m$ applied to  $\Ss_{m+1}$
   gives us $\Ss_{m}$, while  $`F_{\infty}$ is the limit of the sequence $(`F_m)_{m \in \nat}$:
   \begin{eqnarray*}
     `F_m(F) &=& \frac{1-\sqrt{1- 4z^4(\frac{1-z^m}{1-z} + F)}}{2z^2},\\
     `F_{\infty}(F) &=& \frac{1-\sqrt{1- 4z^4(\frac{1}{1-z} + F)}}{2z^2}.
   \end{eqnarray*}

   In particular, when $m=0$, we have
   \begin{eqnarray*}
     `F_0(F) &=&  \frac{1-\sqrt{1 - 4z^4F}}{2z^2}.
   \end{eqnarray*}
   By Equation~(\ref{eq:sm}) and the definition of $`F_m$, we have
   \begin{eqnarray*}
     \Ss_m(z) &=& `F_m(\Ss_{m+1}(z)).
   \end{eqnarray*}

   The $`F_m$'s and $`F_{\infty }$ are monotonic over functions over $(0,1)$, which
   means that for every $z \in (0,1)$ we have
   \begin{eqnarray*}
     F(z) \le G(z) &"=>"&`F_m(F(z)) \le `F_m(G(z)),\\
     F(z) \le G(z) &"=>"&`F_{\infty}(F(z)) \le `F_{\infty}(G(z)).
   \end{eqnarray*}

   For each $m \in \nat$, let us consider the function $\St_m(z)$ defined as the fixed
   point of $`F_m$. In other words, $\St_m(z)$ is defined as the solution of the
   following equation:
   \begin{eqnarray*}
     \St_m(z) &=& `F_m(\St_m(z)).
   \end{eqnarray*}

   Notice that $S_{m,n} \le S_{m+1,n} \le S_{\infty,n}$, for the reasons that
   given a size $n$, there are less trees with at most $m$ free variables than trees
   with at most $m+1$ free variables and less trees with at most $m+1$ free variables
   than trees with any numbers of free variables.  For $z \in (0,\rho)$ we can claim
   that $\Ss_m(z) \le\Ss_{m+1}(z) \le \Ss_\infty(z)$. Applying $`F_m$ to the first inequality,
   we obtain, for $z \in (0,\rho)$,
   \begin{eqnarray*}
     `F_m(\Ss_m(z)) &\le & \Ss_m(z)
   \end{eqnarray*}
 Then we get 
 \begin{displaymath}
    `F_m^{k+1}(\Ss_m(z)) \le`F_m^k(\Ss_m(z))\le ... \le`F_m(\Ss_m(z)) \le  \Ss_m(z)
 \end{displaymath}
 and since
 \begin{displaymath}
   \St_m(z) = \lim_{k"->"\infty} `F_m^k(\Ss_m(z)) = \inf_{k`:\nat} `F_m^k(\Ss_m(z)) 
 \end{displaymath}
 we infer
   \begin{displaymath}
     \St_m(z) \le  \Ss_m(z) \ \le \ \Vv(z). \label{eq:tilde}
   \end{displaymath}
   Since $\St_m(z)$ satisfies
   \begin{eqnarray*}
     2z^2\St_m(z) &=& 1 -  \sqrt{1 - 4z^4 \Big( \frac{1-z^m}{1-z} + \St_m(z) \Big)},
   \end{eqnarray*}
   we get
   \begin{eqnarray*}
     z^2\St_m^2(z) - (1-z^2) \St_m(z) + \frac{z^2(1-z^m)}{1-z} = 0.
   \end{eqnarray*}

   The discriminant of this equation is:
   \begin{eqnarray*}
     `D_m &=& (1-z^2)^2 - \frac{4z^4(1-z^m)}{1-z}.
   \end{eqnarray*}

   The values for which $`D_m = 0$ are the singularities of $\St_m(z)$. Let us denote
   the main singularity of $\St_m(z)$ by $`s_m$. From Equation~(\ref{eq:tilde}) we see
   that
   \begin{eqnarray*}
     `s_m\ge `r_m \ge `r. \label{eq:sigma_m_rho_m_rho}
   \end{eqnarray*}

   The value of $`s_m$ is equal to the root of smallest modulus of the following
   polynomial:
   \[ P_m(z) \ := \ (z-1)`D_m \ = \ 4z^4(1-z^m) - (1-z)^3 (1+z)^2 .\]

   In the case of the function $\St_\infty(z)$, we get the polynomial
   \begin{eqnarray*}
     P_\infty(z)&=& - 1 + z + 2\,z^2 - 2\,z^3 + 3\,z^4 + z^5 \ =\ - R_\infty(z),
   \end{eqnarray*}
   whose root of smallest modulus is the same as for $R_\infty(z)$, hence it is equal to $`r$.

   Now, let us show that the sequence $(\sigma_m)_{m \in \nat}$ of roots of smallest
   modulus of polynomials $P_m(z)$ is decreasing and that it converges to
   $`r$. As a hint, Figure~\ref{fig:roots} illustrates plots of polynomials $P_m$'s (for several
   values of $m$) in the interval $[0.3,\, 1]$.  It shows the roots of the $P_m$'s at
   the intersection of the curves and of the horizontal axis, between $`r$ (for
   $P_{\infty}$) and $1$ (for $P_0$).
     \begin{figure}[t!h]
       \centering
       \includegraphics[scale=.5]{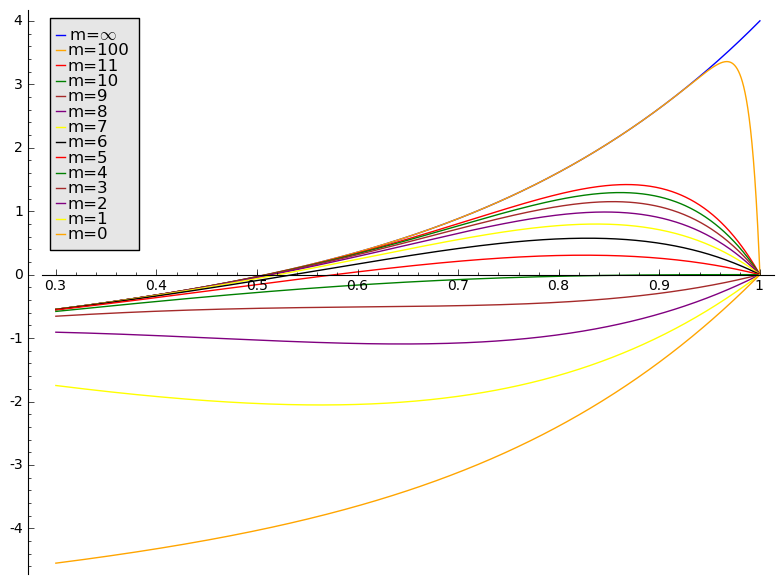}
       \caption{Plots of the $P_m$'s.  The top curve is $P_{\infty}$, below there is
         $P_{100}$, then $P_{10}$, $P_9$ etc. until $P_1$ and $P_0$.}
       \label{fig:roots}
     \end{figure}

   Notice that $P_m(z) = P_\infty(z) - 4z^{m + 4}$. Given a value $`z$ such that
   $`r<`z<1$ (for instance $`z=0.8$), $P_m(z)$ converges uniformly to $P_\infty(z)$ in
   the interval $[0,`z]$. Therefore $`s_m "->" `r$ when $m"->"\infty$. By $`s_m \geq
   `r_m \geq `r$, we get $`r_m "->" `r$, as well. Since all the $`r_m$'s are equal, we
   obtain that $`r_m =`r$ for every natural $m$.
 \end{proof}

 The number of closed terms of a given size cannot be greater than the number of all
 terms. Therefore, we immediately obtain what follows.

 \begin{theorem}
   The number of closed binary $`l$-terms of size $n$ is of exponential order
   $(1/\rho)^n$, \ie
   \[ S_{0,n} \bowtie 1.963448\ldots^n.\]
 \end{theorem}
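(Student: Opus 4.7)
The plan is to extract the theorem directly from the two propositions immediately preceding it. The first proposition shows that all generating functions $\Ss_m(z)$ share a common dominant singularity $\rho_0$, and the second identifies this common value as $\rho_0 = \rho \doteq 0.509308127$, the dominant singularity of $\Vv(z)$. With this pair of facts in hand, the conclusion follows by a one-line application of Fact~1.

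More concretely, I would argue as follows. The series $\Ss_0(z) = \sum_n S_{0,n}\,z^n$ has nonnegative integer coefficients bounded by $2^n$ (the number of binary strings of length $n$), so it is analytic at the origin. By the two propositions, the modulus of its singularity nearest the origin is $\rho_0 = \rho$. Fact~1 then yields
\[ S_{0,n} \;=\; [z^n]\Ss_0(z) \;\bowtie\; (1/\rho)^n \;\doteq\; 1.963448\ldots^n,\]
which is exactly the statement of the theorem. As a cross-check of the two inequalities hiding inside the $\limsup$: the upper bound $\limsup |S_{0,n}|^{1/n} \le 1/\rho$ comes from the termwise bound $S_{0,n} \le S_{\infty,n}$ combined with the previous theorem, and this is the observation the authors use to label the conclusion ``immediate''; the matching lower bound is the nontrivial content, namely that $\Ss_0(z)$ really does have a singularity on the circle $|z|=\rho$, which is precisely what Propositions~1 and~2 establish.

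The main obstacle lies not in this theorem but in the two supporting propositions, particularly in the argument that the sequence of singularities $\sigma_m$ of the fixed-point approximants $\St_m(z)$ decreases to $\rho$, via the uniform convergence $P_m \to P_\infty$ on compact subintervals of $(\rho,1)$. Once those are granted, the present theorem needs no further analytic machinery. Note also that the exponential-order framework used here is all one can hope to read off directly: the subexponential factor (the analogue of the $C n^{-3/2}$ in the $S_{\infty,n}$ theorem) would require identifying the nature of the singularity of $\Ss_0(z)$ at $\rho$, not merely its location, and the infinite nesting of radicals in~(\ref{eq:sm}) puts this out of reach of the present method.
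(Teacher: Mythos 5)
Your proposal is correct and follows essentially the same route as the paper: the two preceding propositions pin the dominant singularity of $\Ss_0(z)$ at $\rho$, and the theorem is then an immediate application of Fact~1 (with the remark $S_{0,n}\le S_{\infty,n}$ serving as the upper-bound side, exactly as the authors note). Your additional observations about the $\limsup$ bounds and the unavailability of the subexponential factor are consistent with the paper, which indeed leaves that factor as a conjecture.
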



 Figure \ref{fig:Smn_ran_n32} shows values $S_{m,n} \cdot `r^n \cdot n^{3/2}$ for a
 few initial values of $m$ and $n$ up to $600$  and  allows us to state the following conjecture.

 \begin{figure}[t!h]
   \centering
   \includegraphics[scale=.5]{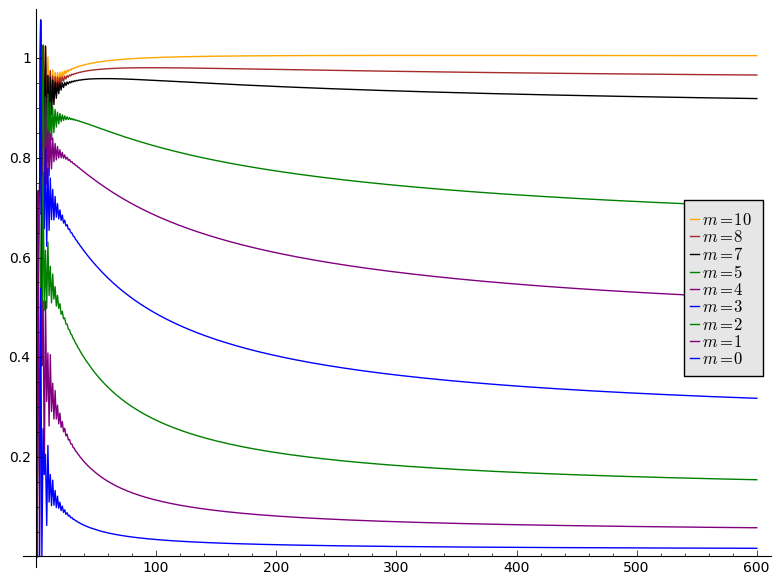}
   \caption{$S_{m,n} `r^n n^{3/2}$ up to $n=600$ for $m = 0$ (bottom) to $10$ (top)}
   \label{fig:Smn_ran_n32}
 \end{figure}

 \begin{conj}
   For every natural number $m$, we have
   \[ S_{m,n} \sim o \big( 1.963448\ldots^n \cdot n^{-3/2} \big). \]
 \end{conj}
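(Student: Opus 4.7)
The conjecture amounts to $S_{m,n}/S_{\infty,n}\to 0$ for every fixed $m$, in view of the preceding theorem for $S_{\infty,n}$. My plan is to prove this via a refined singularity analysis at the common dominant singularity $\rho$ of the $\Ss_m$'s established in the two previous propositions. Writing $D_m(z) := \Vv(z) - \Ss_m(z) \ge 0$, I would aim to show that $D_m$ shares with $\Vv$ exactly the same leading $\sqrt{1-z/\rho}$ singular part at $\rho$, so that these terms cancel in $\Ss_m = \Vv - D_m$ and only singular contributions of order $(1-z/\rho)^{\alpha}$ with $\alpha>1/2$ remain. The transfer theorem (Fact~\ref{fact:asym_exp}) would then give $[z^n]\Ss_m(z) = o(\rho^{-n} n^{-3/2})$.

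The first step is to subtract the functional equation $\Vv = z^2/(1-z) + z^2\Vv + z^2\Vv^2$ from (\ref{eq:Szu}), obtaining after factoring
\[
D_m(z)\bigl(1 - z^2(\Ss_m(z)+\Vv(z))\bigr) \;=\; z^2 D_{m+1}(z) + \frac{z^{m+2}}{1-z}.
\]
Iterating this identity yields the explicit series
\[
D_m(z) \;=\; \sum_{j\ge 0}\frac{z^{m+2+3j}}{1-z}\prod_{l=0}^{j}K_{m+l}(z),
\qquad
K_k(z):=\frac{1}{1 - z^2(\Ss_k(z)+\Vv(z))}.
\]
Since $\Ss_m(\rho)<\Vv(\rho)=(1-\rho^2)/(2\rho^2)$ strictly for every finite $m$, the denominator of each $K_k$ stays bounded below by $\rho^2$ as $z\to\rho^{-}$. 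Hence each $K_k$ is continuous at $\rho$ and carries only a $\sqrt{1-z/\rho}$ singular contribution inherited from $\Ss_k$ and $\Vv$, while the geometric decay $\rho^{3j}$ guarantees uniform convergence of the sum up to the boundary. Writing $c_k := -[\sqrt{1-z/\rho}]\,\Ss_k(z)$ and $L_k := 1 - \rho^2(\Ss_k(\rho)+\Vv(\rho))$, a term-by-term expansion then produces an explicit formula for the $\sqrt{1-z/\rho}$-coefficient of $D_m$ as an infinite series in the $c_k$'s and $L_k$'s.

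The hard part is to verify that this coefficient equals $c_\infty := -[\sqrt{1-z/\rho}]\,\Vv(z)$. On the one hand, the singular expansion of $\Ss_m = \F_m(\Ss_{m+1})$ already forces the recurrence $c_m = \rho^2 c_{m+1}/\sqrt{A_m}$, with $A_m := 1 - 4\rho^4\bigl((1-\rho^m)/(1-\rho) + \Ss_{m+1}(\rho)\bigr)>0$; this relation alone is consistent with any value of $c_m$ and in particular does not rule out $c_m>0$, which would contradict the conjecture. On the other hand, since $A_m\to 0$ as $m\to\infty$ while $c_\infty>0$, the same recurrence is incompatible with $c_m\to c_\infty$ unless $c_k = 0$ for every finite $k$, in which case the effective singular exponent of $\Ss_m$ at $\rho$ would have to jump strictly above $1/2$ — exactly what the conjecture predicts. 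Turning this heuristic into a rigorous proof is the main obstacle: one must extract the $\sqrt{1-z/\rho}$-coefficient from the closed form of $D_m$ above and verify the self-referential identity $[\sqrt{1-z/\rho}]D_m(z) = [\sqrt{1-z/\rho}]\Vv(z)$. An alternative, more combinatorial route would be to show that in a uniformly random binary $\lambda$-term of size $n$ the number of distinct free indices tends to infinity in probability, which directly gives $S_{m,n}/S_{\infty,n}\to 0$; but this requires an independent analysis of the de~Bruijn statistics of random terms, which is presumably why the statement is left as a conjecture.
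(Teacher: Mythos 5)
First, a point of calibration: the paper does \emph{not} prove this statement. It is stated as a conjecture, supported only by the numerical evidence of the plots of $S_{m,n}\,\rho^n\,n^{3/2}$ up to $n=600$, so there is no proof of record to measure yours against; your proposal stands or falls on its own. The parts you actually carry out are correct: subtracting the functional equations for $\Ss_m$ and $\Vv$ does give $D_m(z)\bigl(1-z^2(\Ss_m(z)+\Vv(z))\bigr)=z^2D_{m+1}(z)+z^{m+2}/(1-z)$; the strict inequality $\Ss_k(\rho)<\Vv(\rho)=(1-\rho^2)/(2\rho^2)$ does bound the denominators of the $K_k$ away from zero near $\rho$; the iterated sum for $D_m$ converges geometrically at $z=\rho$; and the local relation $c_m=\rho^2c_{m+1}/\sqrt{A_m}$ is the right computation \emph{under the hypotheses on which it rests}.

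The gap you flag, however, is not a technical loose end but the entire content of the conjecture, and there is a second gap underneath it that you do not name. (1) The recurrence $c_m=\rho^2c_{m+1}/\sqrt{A_m}$ propagates $c_m=0$ and $c_m\neq0$ equally well: any choice of $c_M$ determines all $c_m$ with $m<M$ consistently, and since $\sqrt{A_m}/\rho^2\to 0$ the solution with $c_0\neq 0$ simply has $c_m\to 0$ rapidly as $m\to\infty$. Pointwise convergence $\Ss_m\to\Vv$ on $(0,\rho)$ gives no control whatsoever on the limit of singular-expansion coefficients, so the ``incompatibility with $c_m\to c_\infty$'' you invoke is vacuous. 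Asserting $c_m=0$ for every $m$ is a restatement of the conjecture, so the argument is circular exactly where it needs to produce content. (2) More fundamentally, the whole computation presupposes that each $\Ss_m$ is $\Delta$-analytic at $\rho$ with a Puiseux expansion $\Ss_m(\rho)-c_m\sqrt{1-z/\rho}+\cdots$. Because each individual $\F_m$ is regular at $\rho$ (as $A_m>0$), the singularity of $\Ss_m$ there arises only ``in the limit'' of the infinitely nested radicals, and neither the existence nor the type of such an expansion is established; without it, the transfer estimate (the paper's Fact~\ref{fact:asym_exp}) cannot be applied to $[z^n]\Ss_m(z)$ at all. Your alternative probabilistic route (showing the number of distinct free indices of a uniformly random term tends to infinity) is a reasonable independent idea, but it is only named, not executed. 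In short: a sensible and correctly set-up plan of attack, but not a proof, and the statement remains open in the paper as well.
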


 \section{Unrankings}
 \label{sec:unrank}

 The recurrence relation (\ref{eq:Smn2}) for $S_{m,n}$ allows us to define the
 function generating $`l$-terms. More precisely, we construct bijections $s_{m,n}$,
 called \emph{unranking} functions, between all non-negative integers not greater than
 $S_{m,n}$ and binary $`l$-terms of size $n$ with at most $m$ distinct free
 variables~\cite{ranking}. This approach is also known as the \emph{‘recursive
   method’}, originating with Nijenhuis and Wilf~\shortcite{nijenhuis78:_combin} (see
 especially Chapter~13).  

 Let us recall that for $n\ge2$ we have, by (\ref{eq:Smn2}),
 \begin{eqnarray*}
   S_{m,n} &=& [m \ge n-1] + S_{m+1,n-2} + \sum_{j=0}^{n-2} S_{m,j} S_{m,n-2-j}.
 \end{eqnarray*}
 The encoding function $s_{m,n}$ takes an integer $k \in \{1,\ldots,S_{m,n}\}$ and
 returns the term built in the following way. 
 \begin{itemize}
 \item If $m \geq n-1$ and $k$ is equal to $S_{m,n}$, the function returns the string
   $1^{n-1} 0$.
 \item If $k$ is less than or equal to $S_{m+1,n-2}$, then the corresponding term is
   in the form of abstraction $00 \widehat{M}$, where $\widehat{M}$ is the
   value of the unranking function $s_{m+1,n-2}$ on $k$.
 \item Otherwise, i.e., if $k$ is greater than $S_{m+1,n-2}$ and less than $S_{m,n}$ for
   $m\ge n+1$ or less than or equal to $S_{m,n}$ for $m < n+1$, the corresponding
   term is in the form of application $01 \widehat{M} \widehat{N}$. In
   order to get strings $\widehat{M}$ and $\widehat{N}$, we compute the
   maximal value $\ell \in \{0,\ldots,n-2\}$ for which
 $$k - S_{m+1,n-2} = \Big( \sum_{j=0}^{\ell-1} S_{m,j}S_{m,n-2-j} \Big) +r \qquad \textrm{with~} \ r \le
 S_{m,\ell} S_{m,n-2-\ell}.$$%
 The strings $\widehat{M}$ and $\widehat{N}$ are the values
 $s_{m,\ell}(k')$ and $s_{m,n- 2-\ell}(k'')$, respectively, where $k'$ is the integer
 quotient upon dividing $r$ by $S_{m,n-2-\ell}$, and $k''$ is the remainder.
 \end{itemize}
 Notice that in this definition, extremal values of $k$ are considered first. 
 Namely, first the maximal value $S_{m,n}$ (for $m\ge n-1$) is considered, then values from the set $\{1,\ldots,S_{m+1,n-2}\}$ are taken into account, and finally, in the third case, the remaining values.

 In Figure~\ref{fig:unrankT-prog} the reader may find a \textsf{Haskell} definition of 
 the data type \texttt{Term} and
 a program~\cite{jones03:_haskel} for computing the values $s_{m,n}(k)$.  
 In this program, the function $s_{m,n}(k)$ is written as \texttt{unrankT m n k} and the
 sequence $S_{m,n}$ is written as \texttt{tromp m n}.


 \begin{figure}[!th]
   \begin{center}
     \hrule\medskip
         \begin{lstlisting}
 data Term = Index Int
           | Abs Term
           | App Term Term

 unrankT :: Int -> Int -> Integer -> Term
 unrankT m n k
   | m >= n - 1 && k == (tromp m n) = Index (n - 1)
   | k <= (tromp  (m+1) (n-2)) = Abs (unrankT (m+1) (n-2) k)
   | otherwise = unrankApp (n-2) 0 (k - tromp  (m+1) (n-2))
     where unrankApp n j r
             | r <=  tmjtmnj  = let (dv,rm) = (r-1) `divMod` tmnj
                                in App (unrankT m j (dv+1)) 
                                       (unrankT  m (n-j) (rm+1))
             | otherwise = unrankApp n (j + 1) (r -tmjtmnj) 
             where tmnj = tromp m (n-j)
                   tmjtmnj = (tromp m j) * tmnj 
     \end{lstlisting}
  \hrule
   \end{center}
   \caption{The data type \texttt{Term} and a program for computing values of the function $s_{m,n}$}
   \label{fig:unrankT-prog}
 \end{figure}

 \section{Number of typable terms}\label{sec:typable}

 The unranking function allows us to traverse all the closed terms of size $n$ and to
 filter those that are typable (see~\cite{hindley97:_basic_simpl_theor})
 in order to count them and similarly to traverse all the terms of size $n$ to count
 those that are typable.  

 The comparison of the numbers of $`l$-terms and the numbers of typable $`l$-terms
 is presented in Figure~\ref{fig:nb_typ}. From left to right:
 \begin{enumerate}
 \item the numbers $S_{0,n}$ of closed terms (typable and untypable) of size $n$,
 \item the numbers $\nbct{n}$ of closed typable terms of size $n$,
 \item the numbers $S_{\infty,n}$ of all terms (typable and untypable) of size $n$,
 \item the numbers $\nbt{n}$ of all typable terms of size $n$.
 \end{enumerate}

 \begin{figure}
   \begin{scriptsize}
     \begin{center}
       \begin{math}
         \begin{array}[t]{r l}
           \mathbf{n} & S_{0,n}\\\hline
           0&0\\
           1&0\\
           2&0\\
           3&0\\
           4&1\\
           5&0\\
           6&1\\
           7&1\\
           8&2\\
           9&1\\
           10&6\\
           11&5\\
           12&13\\
           13&14\\
           14&37\\
           15&44\\
           16&101\\
           17&134\\
           18&298\\
           19&431\\
           20&883\\
           21&1361\\
           22&2736\\
           23&4405\\
           24&8574\\
           25&14334\\
           26&27465\\
           27&47146\\
           28&89270\\
           29&156360\\
           30&293840\\
           31&522913\\
           32&978447\\
           33&1761907\\
           34&3288605\\
           35&5977863\\
           36&11148652\\
           37&20414058\\
           38&38071898\\
           39&70125402\\
           40&130880047\\
           41&242222714\\
           42&452574468\\
           43&840914719\\
           44&1573331752\\
           45&2933097201\\
           46&5495929096
         \end{array}
         \qquad
         \begin{array}[t]{r l}
           \mathbf{n} & \nbct{n}\\\hline
           0&0\\
           1&0\\
           2&0\\
           3&0\\
           4&1\\
           5&0\\
           6&1\\
           7&1\\
           8&1\\
           9&1\\
           10&5\\
           11&4\\
           12&9\\
           13&13\\
           14&23\\
           15&29\\
           16&67\\
           17&94\\
           18&179\\
           19&285\\
           20&503\\
           21&795\\
           22&1503\\
           23&2469\\
           24&4457\\
           25&7624\\
           26&13475\\
           27&23027\\
           28&41437\\
           29&72165\\
           30&128905\\
           31&227510\\
           32&405301\\
           33&715078\\
           34&1280127\\
           35&2279393\\
           36&4086591\\
           37&7316698\\
           38&13139958\\
           39&23551957\\
           40&42383667\\
           41&76278547\\
           42&137609116\\
           43&248447221\\
           44&449201368\\
           45&812315229\\
           46&1470997501\\
         \end{array}
       \qquad\qquad
       \begin{array}[t]{r l}
         \mathbf{n} & S_{\infty,n}\\\hline
         0&0\\
         1&0\\
         2&1\\
         3&1\\
         4&2\\
         5&2\\
         6&4\\
         7&5\\
         8&10\\
         9&14\\
         10&27\\
         11&41\\
         12&78\\
         13&126\\
         14&237\\
         15&399\\
         16&745\\
         17&1292\\
         18&2404\\
         19&4259\\
         20&7915\\
         21&14242\\
         22&26477\\
         23&48197\\
         24&89721\\
         25&164766\\
         26&307294\\
         27&568191\\
         28&1061969\\
         29&1974266\\
         30&3698247\\
         31&6905523\\
         32&12964449\\
         33&24295796\\
         34&45711211\\
         35&85926575\\
         36&161996298\\
         37&305314162\\
         38&576707409\\
         39&1089395667\\
         40&2061428697\\
         41&3901829718\\
         42&7395529009\\
         43&14 023 075 765\\
         44&26620080576\\
         45&50556677634\\
         46&96 108 150 292
       \end{array}
 \qquad
 \begin{array}[t]{r l}
   \mathbf{n} & \nbt{n}\\[.2pt]\hline
   0&0\\
   1&0\\
   2&1\\
   3&1\\
   4&2\\
   5&2\\
   6&3\\
   7&5\\
   8&8\\
   9&13\\
   10&22\\
   11&36\\
   12&58\\
   13&103\\
   14&177\\
   15&307\\
   16&535\\
   17&949\\
   18&1645\\
   19&2936\\
   20&5207\\
   21&9330\\
   22&16613\\
   23&29921\\
   24&53588\\
   25&96808\\
   26&174443\\
   27&316267\\
   28&572092\\
   29&1040596\\
   30&1888505\\
   31&3441755\\
   32&6268500\\
   33&11449522\\
   34&20902152\\
   35&38256759\\
   36&70004696\\
   37&128336318\\
   38&235302612\\
   39&432050796\\
   40&793513690\\
   41& 1459062947\\
   42& 2683714350\\
   43 & \textit{unknown}\\
   44 & \textit{unknown}\\
   45 & \textit{unknown}\\
   46 & \textit{unknown}\\
 \end{array}
         \end{math}
     \end{center}
   \end{scriptsize}

   \caption{Numbers of terms and numbers of typable terms}
   \label{fig:nb_typ}
 \end{figure}
 In particular, let us notice that $S_{0,n}$ and $T_{0,n}$ are the same up to $n=8$,
 where we meet the smallest untypable closed term namely $`l \Var{1}\,\Var{1}$.  Similarly,
 $S_{\infty,n}$ and $T_{\infty,n}$  are the same up to $n=6$, where we meet the
 smallest untypable term, namely $\Var{1}\,\Var{1}$.  Values  $T_{\infty,43}$,  $T_{\infty,44}$,  
 $T_{\infty,45}$ and $T_{\infty,46}$ are not available since they require too many computations, 
 between $14$ millions and $96$ millions of $`l$-terms have to be checked for
 typability in each case.  Paul Tarau~\shortcite{tarau15:_type_gener_lambd_terms} gives
 a Prolog implementation and applies it to the generation of typed $`l$-terms.

 Thanks to the unranking function, we can build a \emph{uniform generator of
   $`l$-terms} and, using this generator, we can build a \emph{uniform generator of simply
   typable $`l$-terms}, which sieves the uniformly generated terms
 through a program that checks their typability (see for
 instance~\cite{DBLP:journals/jfp/GrygielL13}).  
 This way, it is possible to generate
 typable closed terms uniformly up to size $450$\footnote{Tromp constructed a self-interpreter
 (which is not typable) for the $`l$-calculus of size $210$.}.


 \section{Boltzmann samplers}
 \label{sec:Boltzmann}

 In this section we present the basic ideas related to Boltzmann models, which combined
 with the theory of generating functions allow us to develop efficient algorithms for 
 generating random $`l$-terms. A thorough and clear overview of Boltzmann samplers, 
 including many examples, can be found in~\cite{DBLP:journals/cpc/DuchonFLS04}. For readers
 not acquainted with the theory, we provide necessary notions and constructions.

 Let $\C$ be a combinatorial class, i.e., a set of combinatorial objects endowed with 
 a size function $|\cdot| \colon \C \to \nat$ such that there are finitely many elements 
 of size $n$ for every $n \in \nat$.
 Let $C_n$ denote the cardinality of the subset of $\C$ consisting of elements 
 of size $n$. Furthermore, let $C(z)$ denote the generating functions associated with the sequence 
 $(C_n)_{n \in \nat}$, which means that
 \begin{displaymath}
   C(z) = \sum_{n=0}^{\infty} C_n z^n.
 \end{displaymath}
 Notice that
 \begin{displaymath}
   C(z) = \sum_{`g`:\C} z^{|`g|}.
 \end{displaymath}

 Given a positive real $x \in {\mathbb R}_+$, we define a \emph{Boltzmann model} for the class 
 $\C$ as the probability distribution that assigns to every element $\gamma \in \C$ 
 a~probability 
 \begin{displaymath}
   \PP_{\C,x}(`g) = \frac{1}{C(x)} \cdot x^{|`g|}.
 \end{displaymath}
 This is a probability since 
 \begin{displaymath}
   \sum_{`g`:\C} \PP_{\C,x}(`g) = \sum_{`g`:\C}\frac{1}{C(x)} \cdot x^{|`g|} = 1.
 \end{displaymath}
 The role of $x$ will become clear later on, but for now we may consider $x$ as a
 parameter used to ``tune'' the sampler, that is to center the mean value around a
 chosen number.  In other words, if we want to set an expected mean value, we have to
 compute the proper value of $x$.  In order the probability $ \PP_{\C,x}(`g)$ to be
 well-defined, we assume the values of $x$ to be taken from the interval
 $(0,\rho_{\C})$, where $\rho_{\C}$ denotes the radius of convergence of $C(z)$.
 Provided $C(z)$ converges at $\rho_{\C}$, we may also consider the case $x =
 \rho_{\C}$.

 The size of an object in a Boltzmann model is a random variable $N$. The \emph{Boltzmann sampler} 
 for a class $\C$ and a parameter $x$ is a random object generator, which draws from the 
 class $\C$ an object of size $n$ with probability
 \begin{displaymath}
   \PP_{x}(N=n) = \frac{C_n x^n}{C(x)}.
 \end{displaymath}
 This is indeed a well-defined probability since
 \begin{displaymath}
   \sum_{n\ge0} \PP_{x}(N=n) = \frac{1}{C(x)}\sum_{n\ge 0} C_n x^n = 1.
 \end{displaymath}

 When generating random objects, we require either the size to be a fixed value~$n$
 or, in order to increase the efficiency of the generation process, we admit some 
 flexibility on the size. In other words, we want the objects to be generated in some 
 cloud around a given size $n$ so that the size $N$ of the objects lies in some interval 
 $(1-\varepsilon)n \le N \le (1+\varepsilon)n$ for some factor $\varepsilon>0$ called 
 a \emph{tolerance}. Such a method is called \emph{approximate-size} uniform random generation.
 What we want to preserve is the uniformity of the distribution among objects of the same size, 
 \ie we want all objects of the same size to be drawn with the same probability.

 The random variable $N$ has \emph{a first moment} and \emph{a second moment}
 \cite{DBLP:journals/cpc/DuchonFLS04}:
 \begin{displaymath}
   \Me_x(N) = x\frac{C'(x)}{C(x)} \qquad \Me_x(N^2) = \frac{x^2 C''(x) + x C'(x)}{C(x)},
 \end{displaymath}
 and a \emph{standard deviation}:
 \begin{eqnarray*}
   `s_{x}(N) &=& \sqrt{\Me_x(N^2) - \Me_x(N)^2} \\
   &=& \sqrt{\frac{x^2 C''(x) + x
       C'(x)}{C(x)} - x^2\frac{C'(x)^2}{C(x)^2}}.
 \end{eqnarray*}

 In the case of approximate-size generation, in order to maximize chances of drawing 
 an object of a desired size, say $n$, we need to choose a proper value of the parameter $x$.
 It turns out that the best value of $x$ is such for which $\Me_x(N)=n$ (for a detailed study
 see \cite{DBLP:conf/analco/FlajoletFP07}). Given size $n$, we will denote by $x_n$ the 
 value of the parameter chosen in such a way. Moreover, if $n$ tends to infinity, then
 $x_n$ tends to $\rho_C$ (see Appendix).

 \subsection{Design of Boltzmann generator}
 \label{sec:design}

 A Boltzmann generator for a class $\C$ is built according to a recursive
 specification of the class~$\C$. Since we are interested in designing a Boltzmann
 sampler for binary $`l$-terms, we present the way of defining samplers for classes
 which are specified by means of the following recursive constructions: disjoint
 unions (data type \texttt{\rouge{Either}}~a~ b), products (data type
 \texttt{\rouge{Pair}}) and sequences (data type \textsf{\rouge{List}}).  First we
 assume a monad \texttt{\rouge{Gen}} defined from the monad
 \texttt{\rouge{State}} of the \textsf{Haskell} library by
 \begin{lstlisting}
   type Gen  = State StdGen
 \end{lstlisting}
 where \texttt{\rouge{StdGen}} is the type of standard random generators.  For the following we
 assume a function \texttt{rand :: \rouge{Gen Double}} that generates a random double
 precision real in the interval $(0,1)$ together with an update of the random
 generator. In our case it is defined in Figure~\ref{fig:rand}.
 \begin{figure}[!th]
   \begin{lstlisting}
 rand :: Gen Double
 rand = do generator <- get
           let (value, newGenerator) = randomR (0,1) generator
           put newGenerator
           return value
   \end{lstlisting} 
   \caption{The function \textsf{rand}}
 \label{fig:rand}
 \end{figure}

 \subsection{Disjoint union}
 \label{sec:union}

 Let \texttt{a} and \texttt{b} be two types (corresponding to combinatorial
 classes $\mathscr{A}$ and $\mathscr{B}$).  A generator \texttt{\bl{genEither}} for
 the disjoint union takes a double precision number for the Bernoulli choice and two
 objects of type \texttt{\rouge{Gen} a} and \texttt{\rouge{Gen} b} and returns an
 object of type \texttt{\rouge{Gen} (\rouge{Either} a b)}.  If we define a new class as \texttt{c =
   \rouge{Either} a b} corresponding to the class $\C$  with the size function inherited
 from classes $\mathscr{A}$ and $\mathscr{B}$, then $C_n = A_n + B_n$ and $C(z) = A(z)
 + B(z)$.  The probability of drawing an object $\gamma \in \C$ equals
 \begin{displaymath}
   \PP_{\C,x}(`g`:\mathscr{A}) = \frac{A(x)}{C(x)}, \qquad \PP_{\C,x}(`g`:\mathscr{B)} = \frac{B(x)}{C(x)}.
 \end{displaymath}

 A generator for the disjoint union, i.e., a Bernoulli variable, may have the
 following type:
 \begin{lstlisting}
 genEither::Double -> (Gen a) -> (Gen b) -> (Either a b -> c) -> Gen c
 \end{lstlisting}
 and then it is given by the 
 \textsf{Haskell} function: 
 \begin{lstlisting}
 genEither p ga gb caORb = do
     x <- rand
     if x < p then do ga' <- ga
                      return (caORb $ Left ga')
              else do gb' <- gb
                      return (caORb $ Right gb')
 \end{lstlisting}
 Notice the type of \textsf{\bl{genEither}} which assumes that \textsf{\bl{genEither}}
 takes a number, two monad values \textsf{\rouge{Gen}~a} and \textsf{\rouge{Gen}~b}
 (which can be seen as pairs of a random generator and a value of type \textsf{a} and
 \textsf{b} respectively), and a continuation \textsf{c} of type
 \textsf{\rouge{Either} a b} and returns a value of the monad \textsf{\rouge{Gen} c}.
 Similar frames will appear in the programs describing other generators.

 \subsection{Cartesian product}

 Given classes $\mathscr{A}$ and $\mathscr{B}$, let $\mathscr{C}$ be the class 
 defined as their Cartesian product, \ie $\mathscr{C} = \mathscr{A} \times \mathscr{B}$.
 Let \texttt{a} and \texttt{b} be Haskell types corresponding to classes 
 $\mathscr{A}$ and $\mathscr{B}$. Then the type of the class $\mathscr{C}$ is $\mathtt{(a,b)}$.
 The size of an object $\gamma = \langle \alpha,\beta \rangle \in \C$ equals the sum 
 of sizes $|\alpha| + |\beta|$.  In more concrete terms, if an object is the pair of 
 an object of size $p$ and an object of size $q$, then its size is $p+q$. 
 Hence, the generating functions satisfy the equation $C(z) = A(z) \cdot B(z)$, since
 \begin{displaymath}
   C(z) = \sum_{\langle`a,`b\rangle `: \mathscr{A} \times \mathscr{B}} z^{|`a|+|`b|}.
 \end{displaymath}
 The probability of drawing $`g=\langle`a,`b\rangle`:\C$ is equal to
 \begin{displaymath}
   \PP_{\C,x}(`g) = \frac{x^{|`g|}}{C(x)} =\frac{x^{|`a+`b|}}{A(x)\cdot B(x)} =  \frac{x^{|`a|}}{A(x)}\cdot \frac{x^{|`b|}}{B(x)}.
 \end{displaymath}
 In this case the Boltzmann sampler is as follows:

 \begin{minipage}{1.0\linewidth}
 \begin{lstlisting}
 genPair :: (Gen a) -> (Gen b) -> (a -> b -> c) -> (Gen c)
 genPair ga gb caANDb = do
      ga' <- ga
      gb' <- gb
      return (caANDb ga' gb')
 \end{lstlisting}
 \end{minipage}

 \section{Boltzmann samplers for $`l$-terms}
 \label{sec:lambda}

 Let us consider the equation involving the generating function for all $`l$-terms:
 \[\Soo(z) = \frac{z^2}{1-z} + z^2 \Soo(z) + z^2 \Soo(z)^2.\]
 It is derived from the description of the set $\mathscr{S}_\infty$
 of $`l$-terms as:
 \[\mathscr{S}_\infty = \mathscr{D} + `l \mathscr{S}_\infty + \mathscr{S}_\infty\,\mathscr{S}_\infty.\]
 That means that the set of $`l$-terms $\mathscr{S}_\infty$ has three components: the
 first component $\mathscr{D}$ corresponds to de Bruijn indices, the second component
 $`l\mathscr{S}_\infty$ corresponds to abstractions, the third component
 $\mathscr{S}_\infty\mathscr{S}_\infty$ corresponds to applications.  We build a
 sampler of random terms based on this trichotomy.  In \textsf{Haskell} this
 corresponds to a data type \texttt{\rouge{Term}} defined in Figure~\ref{fig:unrankT-prog}.
 Since there are three components in the union,
 the value $\mathtt{p}$ which we
 considered in $\mathtt{genEither}$ will be replaced by two values $\mathtt{p1}$ and
 $\mathtt{p2}$.  First we describe in \textsf{Haskell} a~function corresponding to $\Soo(z)$:
   \begin{lstlisting}
 sInfinity z = num z / den z
    where num z = z^3 - z^2 - z + 1 - sqrt(sq z)
          den z = 2*z*z*(1 - z)
          sq z = z^6 + 2*(z^5) - 5*(z^4) + 4*(z^3) - z^2 - 2*z + 1
   \end{lstlisting}
 and two functions:
 \begin{lstlisting}
 p1 x = x*x / (1-x) / sInfinity x
 p2 x = p1 x + x^2
 \end{lstlisting}
 Using \textsf{Sage} we computed the values:
 \begin{footnotesize}
   \begin{displaymath}
     x_{100} \,=\, 0.5092252666102192\quad
     x_{600} \,=\, 0.5093058457062517\quad 
     x_{1000} \,=\, 0.5093073063214039
   \end{displaymath}
 \end{footnotesize}
 which correspond to the values of the parameter $x$ appropriate for an expected value
 $\Me_{x_i}(N)$ equal to $i=100$, $i=600$, and $i=1000$, respectively.  In
 other words if the values $x_{100}$, $x_{600}$ and $x_{1000}$ are passed to the
 sampler, it will generate objects with average size $100$, $600$, and $1000$,
 respectively.  They are obtained by solving in $x$ the equations
 \begin{eqnarray*}
   \Me_x(N) &=& 100,\\
   \Me_x(N) &=& 600,\\
   \Me_x(N) &=& 1000,
 \end{eqnarray*}
 in which $C(x)$ is replaced by $\Soo(x)$.

 \subsection{General samplers of $`l$-terms}
 The values of the probabilities for a given $x$ are
 \begin{itemize}
 \item $p_v(x) = \frac{x^2}{(1-x) S_\infty(x)}$ for variables,
 \item $p_{abs}(x) = x^2$ for abstractions,
 \item $p_{app}(x) = x^2 S_\infty (x)$ for applications.
 \end{itemize}
 We get the following \textsf{Haskell} function which selects among
 \texttt{\rouge{Index}}, \texttt{\rouge{Abs}} and \texttt{\rouge{App}}
 \begin{lstlisting}
 genTermGeneric :: Double -> Gen Int -> Gen Term
 genTermGeneric x gi  = do
     p <- rand
     if p < p1 x
     then do i <- genIntGeneric x
             return (Index i)
     else if p < p2 x
          then do t <- genTermGeneric x gi
                  return (Abs t)
          else genPair (genTermGeneric x gi) (genTermGeneric x gi) App
 \end{lstlisting}
 Notice the call to the function
 \begin{lstlisting}
 genIntGeneric :: Double -> Gen Int
 genIntGeneric x = do
     p <- rand
     if p < x then do n <- genIntGeneric x
                      return (n+1)
                else return 1
 \end{lstlisting}
 which is used to generate random de Bruijn indices.

 \subsection{Samplers for large $`l$-terms}
 \begin{figure}[h!t]
   \begin{lstlisting}
 rho :: Double
 rho = 0.509308127024237357194177485
 rhosquare = rho * rho
 p1rho = (1 - rhosquare) / 2
 p2rho = p1rho + rhosquare

 genTerm ::  Gen Int -> Gen Term
 genTerm gi = do
     p <- rand
     if p < p1rho then do i <- genInt
                          return (Index i)
                   else if p < p2rho
                        then do t <- genTerm gi
                                return (Abs t)
                        else genPair (genTerm gi) (genTerm gi) App

 genInt :: Gen Int
 genInt = do
     p <- rand
     if p < rho then do n <- genInt
                        return (n+1)
                else return 1
   \end{lstlisting}
   \caption{The function \texttt{genTerm}}
 \label{fig:genTerm}
 \end{figure}
 As discussed in the previous section, in order to generate random
 large $\lambda$-terms, i.e., \mbox{$`l$-terms} with average size $\infty$, we set the value
 of $x$ to $\rho = 0.5093081270242373\ldots$, which we call \texttt{rho} in
 \textsf{Haskell}. Its square is \ifJFP $`r^2 = 0.25939476825293667\ldots$. 
 \else \[`r^2 = 0.25939476825293667\ldots.\]\fi  Notice that since
 $`r$ is a root of the polynomial below the square root, $S_\infty(`r) =
 \frac{1-`r^2}{2`r^2}$.  The values of the probabilities for selecting among
 variables, abstractions and applications are:
 \begin{itemize}
 \item $p_v(`r) = \frac{2`r^4}{(1-`r)(1-`r^2)}$ \ for variables,
 \item $p_{abs}(`r) = `r^2$ \ for abstractions,
 \item $p_{app}(`r) = \frac{1-`r^2}{2}$ \ for applications.
 \end{itemize}
 Let us simplify $\frac{2`r^4}{(1-`r)(1-`r^2)}$ into $\frac{1-`r^2}{2}$ by computing
 the difference:
 \begin{eqnarray*}
   \frac{2`r^4}{(1-`r)(1-`r^2)} - \frac{1-`r^2}{2} &=& \frac{4`r^4 -
     (1-`r^2)^2(1-`r)}{2(1-`r)(1-`r^2)}\\
   &=& \frac{`r^5 + 3`r^4 -2`r^3 + 2`r^2 + `r -1}{2(1-`r)(1-`r^2)} \,=\, 0.
 \end{eqnarray*}
 Therefore to generate random terms of mean size going to infinity we get the results
 \begin{itemize}
 \item $p_v(`r) = \frac{1-`r^2}{2} \approx 0.3703026$ \ for variables,
 \item $p_{abs}(`r) = `r^2 \approx 0.25939476$ \ for abstractions,
 \item $p_{app}(`r) = \frac{1-`r^2}{2} \approx 0.3703026$ \ for applications.
 \end{itemize}
 We build the function \texttt{genTerm} which generates random terms and the function
 \texttt{genInt} which generates integers necessary for the de Bruijn indices (see
 Figure~\ref{fig:genTerm}).
 The list 
 \ifJFP\begin{small}\else \begin{tiny}\fi
   \begin{displaymath}
     60,5,3,3,6,19,8,7,728,3753,12,15,3733,93,4,3,4,4,13,137,6,18,372,50,25,43140,8,5,3,6
   \end{displaymath}
 \ifJFP\end{small}\else \end{tiny}\fi
 is the list of term sizes generated by \texttt{\bl{genTerm}} when the
 seeds of the random generator are $0,1, 2, \ldots$ up to $30$.  In the same list of term
 sizes the $50^{th}$ element is $127\,358$ and the $51^{st}$ element is
 $4\,379\,394$, showing that generating a random term of size more than four million
 is easy. 

 Assume now that we want to generate terms that are below a certain \texttt{uplimit}, as
 required by practical applications.  The function called \texttt{\bl{ceiledGenTerm}}
 is almost the same as \texttt{\bl{genTerm}}, except that when the up limit is passed
 it returns \texttt{\rouge{Nothing}}.  Therefore the type of \texttt{\bl{ceiledGenTerm}}
 differs from  \texttt{\bl{genTerm}} type in the sense that it takes a
 \texttt{\rouge{Gen (Maybe Term)}} (instead of a \texttt{\rouge{Gen Term}}) and returns a
 \texttt{\rouge{Gen (Maybe Term)}} (instead of a \texttt{\rouge{Gen Term}}).  A
 Boltzmann sampler \texttt{\bl{ceiledGenTerm}} for large $`l$-terms of size limited by
 \texttt{uplimit} is given in Figure~\ref{fig:ceiledGenTerm}.
 \begin{figure*}
   \centering
   \begin{scriptsize}
   \begin{lstlisting}
 ceiledGenTerm :: Int -> Gen Int -> Gen (Maybe Term)
 ceiledGenTerm uplimit gi = do
     p <- rand
     if p < p1rho 
     then do  -- generate an index 
          i <- genInt
          return $ if i < uplimit then Just (Index i) else Nothing
     else if p < p2rho 
          then do -- generate an abstraction
               mbt <- ceiledGenTerm uplimit gi
               return $ case mbt of
                        Just t -> if 2 + size t <= uplimit
                                  then Just (Abs t)
                                   else Nothing
                        Nothing -> Nothing
           else do -- generate an application
                mbt1 <- ceiledGenTerm uplimit gi
                mbt2 <- ceiledGenTerm uplimit gi
                return $ case mbt1 of
                         Just t1 -> case mbt2 of
                                    Just t2 -> if 2 + size t1 + size t2 <= uplimit
                                               then Just (App t1 t2)
                                                else Nothing
                                    Nothing -> Nothing 
   \end{lstlisting}
 \end{scriptsize}  
   \caption{Boltzmann sampler for large $`l$-terms}
   \label{fig:ceiledGenTerm}
 \end{figure*}

 Suppose now that we want to generate terms within a size interval, i.e., with an
 up limit and a down limit.  By definition, \texttt{\bl{ceiledGenTerm}} generates terms
 within an up limit.  For terms within the down limit, terms generated by
 \texttt{\bl{ceiledGenTerm}} are filtered so that only terms large enough are
 kept.  Recall that the method is linear in time complexity.  Thus the generation
 of a term of size $100\,000$ takes a few seconds, the generation of a term of size one
 million takes three minutes and the generation of a term of size five million takes
 five minutes.

 To generate large typable $`l$-terms
 we generate $`l$-terms and check their typability. Currently we are able to generate
 random typable $`l$-terms of size $500$.  This outperforms methods based on ranking
 and unranking like the method proposed in~\cite{DBLP:journals/jfp/GrygielL13}. This
 is in particular due to the fact that such methods need to handle numbers of
 arbitrary precision and their random generation, which it not efficient. Indeed
 ranking or unranking requires handling integers with hundred digits or more and performing
 computations on them for their random generations.  On the other hand, Boltzmann
 samplers ignore numbers, go directly toward the terms to be generated and do that efficiently.

 \section{Related works}
 \label{sec:rel-works}

 We look at related works from two perspectives: works on counting $`l$-terms
 and works specifically related to Boltzmann samplers.

 \subsection{Works on counting $`l$-terms}
 \label{sec:works-counting}

 Connected to this work, let us mention papers on counting \mbox{$`l$-terms}
 \cite{DBLP:journals/tcs/Lescanne13,DBLP:journals/jfp/GrygielL13} and on evaluating
 their combinatorial properties, namely
 \cite{bodini11:_lambd_bound_unary_heigh,DBLP:journals/corr/abs-0903-5505,DBLP:journals/combinatorics/BodiniGGJ13,DBLP:journals/tcs/BodiniGJ13}.
 A~comparison of our results with those of \cite{DBLP:journals/jfp/GrygielL13} can be made,
 since \cite{DBLP:journals/jfp/GrygielL13} gives a precise counting of $`l$-terms when
 variables (de Bruijn indices) have size~$0$, yielding sequence \textsf{A220894} in
 the \emph{On-line Encyclopedia of Integer Sequences} for the number of closed terms
 of size $n$. The first fifteen terms are:
 \begin{center}
   \begin{math}
     0,\ 1,\ 3,\ 14,\ 82,\ 579,\ 4741,\ 43977,\ 454283,\ 5159441,\ 63782411,\
     851368766,
   \end{math}

   \begin{math}
     12188927818,\ 186132043831,\ 3017325884473.
   \end{math}
 \end{center}
 If one compares them with the  first fifteen terms of $S_{0,n}$:
 \[ 0,\ 0,\ 0,\ 0,\ 1,\ 0,\ 1,\ 1,\ 2,\ 1,\ 6,\ 5,\ 13,\ 14,\ 37,\]
 one sees that $S_{0,n}$ grows much more slowly than \textsf{A220894}, which is not
 surprising since $S_{0,n}$ grows exponentially, whereas \textsf{A220894} grows
 super-exponentially (the radius of convergence of its generating function is $0$).
 This super-exponential growth and the related $0$ radius of convergence prevent from
 building a Boltzmann sampler.  Moreover, it does not make sense to count all 
 (including open) terms of size
 $n$ when variables have size $0$ for the reason that there are infinitely many such
 terms for each $n$.  Notice that taking the size of variables to be $1$,
 like~\cite{DBLP:journals/tcs/Lescanne13,bodini11:_lambd_bound_unary_heigh}, does not
 make much difference for growth and generation.

 \subsection{Works related to Boltzmann samplers for terms}
 \label{sec:works}

 In the introduction we cited papers that are clearly connected to this work.  In a
 recent work, Bacher et al.~\shortcite{DBLP:journals/corr/BacherBJ14} propose an improved
 random generation of binary trees and Motzkin trees, based on R\'{e}my algorithm~\cite{DBLP:journals/ita/Remy85} (or
 algorithm R in Knuth~\shortcite{KnuthVol4_4}).  They
 propose like R\'{e}my to grow the trees from inside by an operation called grafting.
 It is not clear how this can be generalized to $`l$-terms as one needs ``to find a
 combinatorial interpretation for the holonomic equations [which] is not [...]
 always possible, and even for simple combinatorial objects this is not elementary''
 (Conclusion of \cite{DBLP:journals/corr/BacherBJ14} page~16).

 \section{Conclusion}
 \label{sec:concl}

 We have shown that if the size of a lambda term is yielded by its binary
 representation~\cite{DBLP:conf/dagstuhl/Tromp06}, we get an exponential growth of the
 sequence enumerating $`l$-terms of a given size.  This applies to closed
 $`l$-terms, to $`l$-terms with a bounded number of free variables, and to all
 $`l$-terms of size $n$. Except for the case of all $`l$-terms, the question of
 finding the non-exponential factor of the asymptotic approximation of the numbers
 of those terms is still open. Moreover, we have described unranking functions (recursive methods) 
 for generating $`l$-terms, which allow us to derive tools
 for their uniform generation and for enumeration of typable $`l$-terms.  The
 process of generating random (typable) terms is limited by the performance of the generators
 based on the recursive methods aka unranking since huge numbers are involved.  
 It turns out that implementing Boltzmann samplers, central tools for the uniform 
 generation of random structures such as trees or $`l$-terms, gives significantly better 
 results. There are now two directions for further development: the first one consists in integrating 
 the programs proposed here in actual testers and optimizers~\cite{DBLP:conf/icfp/ClaessenH00} 
 and the second one in extending Boltzmann samplers to other kinds of programs, for instance programs with block
 structures. From the theoretical point of view, more should be known about generating
 functions for \emph{closed $`l$-terms} or \emph{$`l$-terms with fixed bounds on the
 number of free variables}.  Boltzmann samplers should be designed for such terms,
 which requires extending the theory. As concerns combinatorial properties of
 \emph{simply typable $`l$-terms}, many question are left open and seem to be
 hard. Besides, since we are interested in generating typable terms, it is worth
 designing random uniform samplers that deliver typable terms directly.

 \subsection*{Acknowledgements}

 The authors are happy to acknowledge people who commented early versions of this
 paper, and contributed to improve it, especially the editors and the referees of the
 \emph{Journal of Functional Programming} and of the \emph{25th International
   Conference on Probabilistic, Combinatorial and Asymptotic Methods for the Analysis
   of Algorithms}.  The authors would like to give special mentions to Olivier Bodini, Bernhard
 Gittenberger, \'{E}ric Fusy, Patrik Jansson, Marek Zaionc and the participants of
 \emph{the 8th Workshop on Computational Logic and Applications.}


\appendix

\section*{The case $x=`r_\C$: generating objects with mean size $\infty$}
\label{sec:rhoC}

In this section we show that choosing a value $x = `r_\C$ for the parameter of a
sampler yields mean size $\infty$ of the generated objects.

Assume that a generating function we consider is of the form:
\begin{displaymath}
  C(x) = \frac{P_C(x) - \sqrt{Q_C(x)}}{R_C(x)},
\end{displaymath}
where $P_C(x)$, $Q_C(x)$ and $R_C(x)$ are three polynomials and where $`r_\C$ is such that
${Q_C(`r_\C) = 0}$ and where $Q_C(x)> 0$ and $R_C(x) \neq 0$ for $0<x < `r_\C$.  Those properties are
fulfilled by the generating function $S_\infty(x)$.  Indeed,
\begin{eqnarray*}
  P_{\Soo}(z) &=& (1-z)(1-z^2),\\
Q_{\Soo}(z) &=& (1-z)(1 - z - 2\,z^2 + 2\,z^3 - 3\,z^4 - z^5),\\
R_{\Soo} &=& 2z^2(1 - z).
\end{eqnarray*}
Notice that $Q'_C(`r_\C)<0$ in the vicinity of $`r_\C$, i.e., in an interval
$(`r_\C-\varepsilon, `r_\C)$ (because $Q_C(`r_\C) = 0$ and $Q_C(x)>0$ for
$x`:(0,`r_\C]$) and
\begin{displaymath}
  C(`r_\C) = \frac{P_C(`r_\C)}{R_C(`r_\C)}
\end{displaymath}
is finite. On the other hand,
\begin{displaymath}
  C'(x) = \frac{P'_C(x)}{R_C(x)} - \frac{Q'_C(x)}{2\sqrt{Q_C(x)}R_C(x)}  - \frac{(P_C(x) - \sqrt{Q_C(x)}) R'_C(x)}{R_C(x)^2}
\end{displaymath}
shows that
\begin{displaymath}
  \lim_{x"->"`r_\C} C'(x) = \infty.
\end{displaymath}
Hence 
\begin{displaymath}
  \lim_{x"->"`r_\C} \Me_x(N) = \lim_{x"->"`r_\C} \frac{x C'(x)}{C(x)} = \infty.
\end{displaymath}
Therefore, if we choose $x$ to be  $`r_\C$, the size of the generated structures will
be distributed all over the natural numbers, with an infinite average size.
\end{document}

